\documentclass{article} 
\usepackage{iclr2026_conference,times}

\usepackage{amsmath,amsfonts,bm}

\def\eqref#1{equation~\ref{#1}}

\def\1{\bm{1}}

\DeclareMathAlphabet{\mathsfit}{\encodingdefault}{\sfdefault}{m}{sl}
\SetMathAlphabet{\mathsfit}{bold}{\encodingdefault}{\sfdefault}{bx}{n}

\usepackage{hyperref}
\usepackage{url}

\usepackage[utf8]{inputenc} 
\usepackage[T1]{fontenc}    
\usepackage{hyperref}       
\usepackage{url}            
\usepackage{booktabs}       
\usepackage{amsfonts}       
\usepackage{nicefrac}       
\usepackage{microtype}      
\usepackage{xcolor}         
\usepackage{minitoc}
\usepackage{times}
\usepackage{latexsym}
\usepackage{amssymb}
\usepackage{multirow}
\usepackage{booktabs}
\usepackage{graphicx}
\usepackage{caption}    
\usepackage{subcaption} 
\usepackage{float}      
\usepackage{amsthm}
\usepackage{amsmath}

\newtheorem{definition}{Definition}
\newtheorem{theorem}{Theorem}

\newtheorem{lemma}{Lemma}

\usepackage[T1]{fontenc}
\usepackage{xcolor}
\usepackage[utf8]{inputenc}
\usepackage{enumitem}
\usepackage{algorithm}
\usepackage{algpseudocode}
\usepackage{microtype}
\usepackage{mathtools}
\usepackage{pifont}
\usepackage{amsmath}
 \usepackage{wrapfig}
\usepackage{color, colortbl}
\definecolor{Gray}{gray}{0.9}
\newcommand{\std}[1]{$\text{\tiny{±}} \scriptscriptstyle #1$}

\title{Concept-Aware Privacy Mechanisms for \\ Defending Embedding Inversion Attacks
}

\author{
  Yu-Che Tsai\textsuperscript{1} \quad
  Hsiang Hsiao\textsuperscript{1} \quad
  Kuan-Yu Chen\textsuperscript{1} \quad
  Shou-De Lin\textsuperscript{1,2} \\
  \textsuperscript{1}Department of Computer Science and Information Engineering, National Taiwan University \\
  \textsuperscript{2}National Taiwan University AI Center of Research Excellence \\
  Taipei, Taiwan \\
  \texttt{\{f09922081,r12946003,d13922034,sdlin\}@csie.ntu.edu.tw}
}

\renewcommand{\cite}{\citep}

\newcommand{\model}{SPARSE}

\iclrfinalcopy 
\begin{document}

\maketitle

\begin{abstract}
Text embeddings enable numerous NLP applications but face severe privacy risks from embedding inversion attacks, which can expose sensitive attributes or reconstruct raw text. Existing differential privacy defenses assume uniform sensitivity across embedding dimensions, leading to excessive noise and degraded utility. We propose SPARSE, a user-centric framework for concept-specific privacy protection in text embeddings. SPARSE combines (1) differentiable mask learning to identify privacy-sensitive dimensions for user-defined concepts, and (2) the Mahalanobis mechanism that applies elliptical noise calibrated by dimension sensitivity. Unlike traditional spherical noise injection, SPARSE selectively perturbs privacy-sensitive dimensions while preserving non-sensitive semantics. Evaluated across six datasets with three embedding models and attack scenarios, SPARSE consistently reduces privacy leakage while achieving superior downstream performance compared to state-of-the-art DP methods.
\end{abstract}

\section{Introduction}

Text embeddings are general representations of textual data that enable various downstream learning tasks without utilizing the raw text. Recent advances in pre-trained models like Sentence-T5~\cite{sentence-T5} and SentenceBERT~\cite{sentenceBERT} enable the generation of high-quality embeddings that power numerous NLP applications. A prominent example is retrieval-augmented generation (RAG) systems~\cite{RAG}, which have led to the widespread adoption of online embedding database services such as Chroma\footnote{\url{https://docs.trychroma.com/}} and Faiss~\cite{Fassis}.

However, recent research has uncovered critical vulnerabilities in text embeddings through \emph{embedding inversion attacks}~\cite{huang-etal-2024-transferable,privacy_risks,information_leak}. These attacks can extract sensitive attributes or even reconstruct the original text. For example, prior work~\cite{privacy_ear} showed that demographic information can be inferred directly from embeddings, while GEIA~\cite{GEIA} demonstrated that full sentences can be recovered. Most strikingly, Vec2Text~\cite{morris2023text} reported that adversaries can reconstruct up to 92\% of a 32-token input from T5-based embeddings. Such vulnerabilities pose significant risks in domains handling sensitive data, such as patient notes in medical RAG system. Thus, developing robust defenses against embedding inversion has become a critical challenge.

Differential privacy (DP)~\cite{dwork2006calibrating} is a widely adopted framework for protecting sensitive information due to its rigorous guarantees. However, most existing DP-based defenses implicitly assume that all information in embeddings is equally privacy-sensitive. This assumption has two drawbacks. First, privacy concerns are inherently user- and context-dependent~\cite{brown2022does}: one individual may prioritize protecting health conditions, while another may care more about political views or personal relationships. Second, to cover all possible sensitive information, DP mechanisms typically inject substantial noise across all embedding dimensions, which inevitably leads to significant utility degradation. 
Therefore, it is crucial to develop a defense mechanism that can provide \textbf{\emph{concept-specific}} protection—allowing users to specify which attributes to protect while preserving embedding quality for non-sensitive content. This work aims to address a key research question:

\noindent \emph{\textbf{Research Question:} Can we selectively obfuscate user-defined private concepts in embeddings while preserving non-sensitive semantics for downstream tasks?}

\begin{figure*}[t!]
  \centering
  \includegraphics[width=\linewidth]{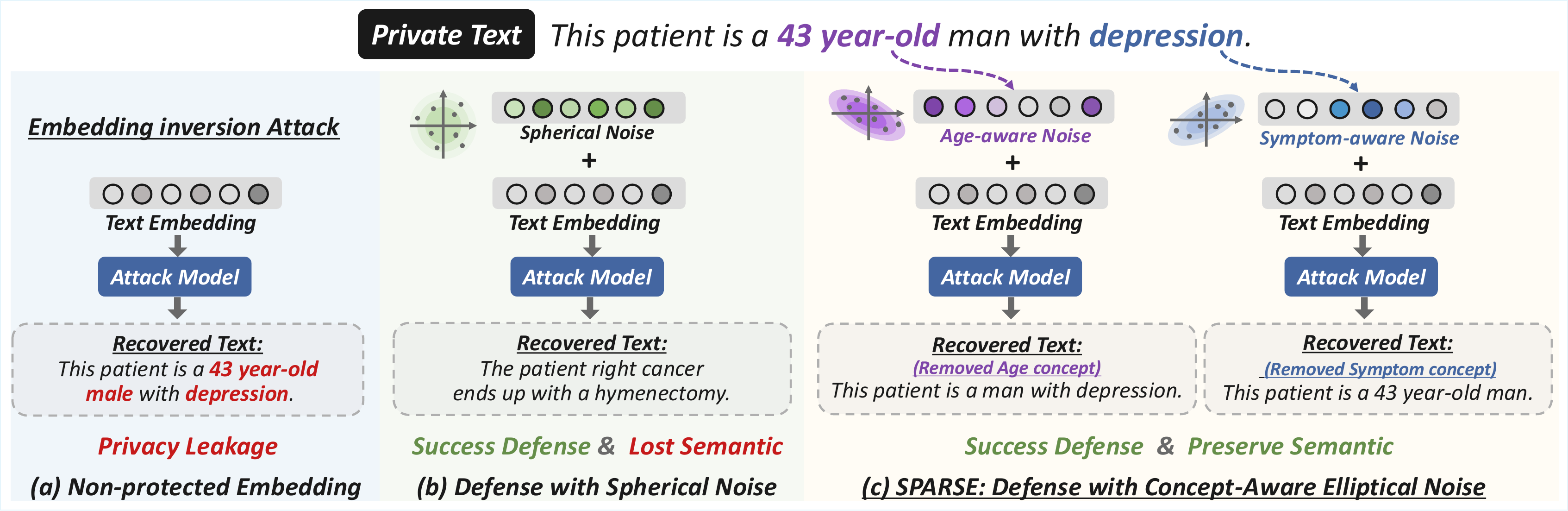}
  \caption{Illustration of embedding inversion attack and different defense strategies. (a) Sensitive information can be easily identified from non-protected text embeddings. (b) Adding spherical noise mitigates privacy leakage but harms textual semantics. (c) Our approach applies elliptical noise guided by a user-defined privacy concept, selectively adding stronger perturbations to privacy-sensitive dimensions while preserving non-sensitive semantics. A real-world case study is presented in Appendix~\ref{appendix:case-study}.
  }
  \label{fig:model-architecture}
  \vspace{-6mm}
\end{figure*}

However, designing such a defense mechanism is non-trivial. The central challenge lies in the mismatch between existing DP methods and the heterogeneous nature of embedding dimensions. Current approaches add the same level of noise to every embedding dimension, implicitly assuming that all dimensions carry equal amounts of sensitive information. However, our preliminary analysis (see Appendix~\ref{sec:motivation}) reveals that embedding dimensions exhibit varying degrees of privacy sensitivity with respect to specific concepts. Some dimensions may be highly sensitive to particular privacy attributes (e.g., medical conditions), while others primarily encode non-sensitive semantic features.

To address this challenge, an ideal defense mechanism should accomplish two key objectives:  (1) identify which embedding dimensions are privacy-sensitive for a given privacy concept, and (2) design a differential privacy mechanism that calibrates noise injection based on dimension sensitivity while maintaining theoretical guarantees. 

We propose \model{} (\textbf{S}ensitivity-guided \textbf{P}rivacy-\textbf{A}ware \textbf{R}epresentations for better \textbf{SE}mantic-preserving), a novel user-centric framework that improves privacy in text embeddings through sensitivity-guided perturbations.
To achieve the first goal, we present a differentiable mask learning framework to estimate the sensitivity of embedding dimensions with respect to a user-defined privacy concept. To achieve the second goal, we introduce the Mahalanobis mechanism, an extension of the generalized Laplace mechanism, which injects elliptical noise calibrated by dimension sensitivity. As illustrated in Figure~\ref{fig:model-architecture}, while traditional methods apply spherical noise that uniformly perturbs all dimensions (panel b), our approach first identifies privacy-sensitive dimensions associated with user-specified concepts (e.g., symptom or age) and then applies elliptical noise with larger perturbations to these sensitive dimensions while minimally affecting others (panel c). We summarize our key contribution as follows:
\vspace{-5pt}
\begin{itemize}[leftmargin=*, itemsep=0.01em]
    \item \textbf{Novel defense paradigm.} We introduce SPARSE, a sensitivity-guided framework for user-defined privacy protection in embeddings, and introduce the Mahalanobis mechanism—an extension of differential privacy that provides rigorous theoretical guarantees.
    
    \item \textbf{Better privacy-utility tradeoffs.}
    We evaluate \model{} against two state-of-the-art differential privacy methods across six datasets. Experimental results show that \model{} consistently reduces privacy leakage while achieving better downstream performance.

    \item  \textbf{Robust generalization.} We assess the generalizability of \model{} using three different embedding models and three attack models. Experimental results demonstrate that \model{} remains consistently effective regardless of the specific embedding method or threat model used.
    
    \item \textbf{Comparable performance to white-box defense.} We further design a white-box variant of \model{} with full access to the threat model. Despite lacking prior knowledge of the attack model, \model{} achieves performance close to the white-box defense, demonstrating its ability to accurately identify privacy-sensitive dimensions.
    
\end{itemize}
\section{Preliminaries}

\subsection{Background on Differential Privacy}
\label{sec:dpbackground}
Differential Privacy (DP)~\cite{dwork2006calibrating} is a rigorous privacy guarantee that ensures a randomized mechanism \(\mathcal{M}\) behaves similarly on any two inputs. There are two common models of DP: central and local. In this work, we focus on Local Differential Privacy (LDP)\cite{kasiviswanathan2011can}, where each user perturbs their data locally before sharing it. This approach offers stronger privacy guarantees in settings where the data collector cannot be trusted, as it removes the need for a trusted aggregator. 

\begin{definition}[Local Differential Privacy]
A randomized mechanism \(\mathcal{M}\) satisfies \(\epsilon\)-local differential privacy if for all pairs of possible user inputs \(x, x'\in\mathcal{X}\) and any output set \(O \subseteq \text{Range}(\mathcal{M})\),
\[
\Pr[\mathcal{M}(x) \in O] \leq e^\epsilon \cdot \Pr[\mathcal{M}(x') \in O],
\]
\end{definition}
where $\epsilon \geq 0$ is a privacy parameter and $\text{Range}(\mathcal{M})$ denotes the set of all possible outputs of $\mathcal{M}$. The mechanism \(\mathcal{M}\) outputs a random sample from a probability distribution over possible outputs, rather than a deterministic value.
The \(\epsilon\) parameter, termed the \emph{privacy budget}, controls the similarity in the output, with a smaller \(\epsilon\) indicating higher privacy protection, and vice versa. 

\noindent \textbf{Generalization with distance metrics.} 
Local differential privacy (LDP) requires a mechanism to produce nearly indistinguishable outputs for any two possible inputs, regardless of how different the inputs are. While this provides a strong privacy guarantee, it often leads to significant utility loss, especially in continuous or semantic domains such as text embeddings~\cite{feyisetan2019leveraging}. To address this limitation, we adopt metric local differential privacy (metric LDP)~\cite{chatzikokolakis2013broadening,alvim2018local}, a generalization of LDP to metric spaces. Metric LDP relaxes the indistinguishability requirement by incorporating a distance function $d$ over the input space. This allows the privacy guarantee to degrade gracefully as the dissimilarity between inputs increases.

\begin{definition}[Metric Local Differential Privacy]
Let $\epsilon \geq 0$ be the privacy parameter, and $d$ be a distance metric for the input space. A mechanism $\mathcal{M}$ satisfies $\epsilon d$-LDP, if for any two inputs $x, x'$ and any output set $O \subseteq \text{Range}(\mathcal{M})$,
\[
\Pr[\mathcal{M}(x) \in O] \leq e^{\epsilon \cdot d(x, x')} \cdot \Pr[\mathcal{M}(x') \in O].
\]
\end{definition}

The key idea is that the privacy guarantee depends on how similar the inputs are: closer inputs must yield nearly indistinguishable outputs, while distant inputs may produce more distinguishable ones. Although the privacy budget $\epsilon$ remains fixed, the output bound varies with the input distance.
To instantiate a mechanism satisfying metric LDP under $\ell_2$ distance, we introduce the generalized Laplace mechanism, which is widely used for embedding sanitization against adversarial attacks.
\begin{definition}[Generalized Laplace Mechanism~\cite{wu2017bolt}]
    Let $\epsilon \ge 0$ be the privacy budget.  The generalized Laplace mechanism $\mathcal{M}_{\text{Lap}}:\mathbb{R}^n\to\mathbb{R}^n$ perturbs any input $x\in\mathbb{R}^n$ as
\[
    \mathcal{M}_{\text{Lap}}(x)\;=\;x + Z_{\text{Lap}},
\quad
Z_{\text{Lap}}\sim f_Z(z)\;\propto\;\exp\left(-\epsilon\,\|z\|_2\right).
\]
\end{definition}
We note two important properties of the generalized Laplace mechanism: (1) it satisfies $\epsilon d$-LDP with respect to the $\ell_2$ norm~\cite{du2023sanitizing}; and (2) it adds isotropic (spherical) noise, implicitly assuming that privacy sensitivity is uniformly distributed across all embedding dimensions. 

\subsection{Problem Statement}
\textbf{Attack Scenario.}
In this work, we focus on a specific embedding inversion attack scenario where the adversary aims to reconstruct the input text from the corresponding text embedding. Formally, given a sequence of tokens $\mathbf{s}$ and the text embedding model $\Phi: \mathbf{s} \rightarrow \mathbb{R}^{n}$, where $n$ denotes the embedding dimension, the attacker seeks to find a function $g$ to approximate the inversion function of $\Phi$ as: $g(\Phi(\mathbf{s})) \approx \Phi^{-1}(\Phi(\mathbf{s})) = \mathbf{s}$. These inversion attacks can be classified into two categories based on their target: (i) token-level inversion~\cite{privacy_risks,information_leak}, which focuses on retrieving individual tokens from the original text, and (ii) sentence-level inversion~\cite{GEIA,morris2023text}, which attempts to reconstruct the entire ordered sequence of text. Regardless of the attack model employed, our study prioritizes understanding whether private information (e.g., names, diseases) within the original text is revealed.

\textbf{Privacy Definition.}
Privacy is inherently context-dependent~\cite{brown2022does}. 
While many prior works adopt a narrow operational definition centered on personally identifiable information (PII) such as names or identification numbers~\cite{sousa2023keep}, such a fixed notion is often insufficient. 
In practice, users may care about protecting different types of sensitive attributes—for instance, health conditions, political views, or personal relationships. To capture this variability, we adopt a \emph{user-centric privacy definition}, where the data owner specifies a privacy concept $\mathcal{C}$ representing the set of tokens or attributes to be protected. In our experiments, we instantiate $\mathcal{C}$ primarily with named entities and PII tokens, but the framework naturally generalizes to other user-defined concepts.

\textbf{Defense Scenario.} 
Our goal is to develop privacy-preserving embeddings that satisfy two objectives:
\vspace{-5pt}
\begin{itemize}[leftmargin=*,itemsep=0.05em]
    \item \textbf{\textit{Goal 1 (Defending against sensitive token inference attack)}}: For the threat model $\mathcal{A}$ and text embedding $\Phi(\mathbf{s})$, where $\mathbf{s}$ is a sentence that contains sensitive information. The data owner defines a privacy concept $\mathcal{C} = \{t_1,t_2,\ldots,t_{|\mathcal{C}|}\}$, which is a set of sensitive tokens (e.g., names, medical conditions) that must be protected. The objective is to generate an obfuscated embedding $\Phi'(\mathbf{s})$ that prevents the threat model $\mathcal{A}$ from accurately reconstructing the tokens in $\mathcal{C}$.
    \item \textbf{\textit{Goal 2 (Maintaining downstream utility)}}: The secondary objective is to ensure that the protective measures, while securing the embeddings from inversion attacks, do not compromise the utility of the embeddings in downstream tasks.
\end{itemize}

\section{\model{} Framework}

\subsection{Identifying Privacy-Sensitive Dimension through Neuron Mask Learning}
To quantify the sensitivity of individual dimensions with respect to a privacy concept $\mathcal{C}$, we propose a neuron mask learning framework that estimates a \emph{relaxed} binary mask over the embedding dimensions. The goal is to learn a mask vector \(\mathbf{m} \in [0,1]^n\) that approximates a binary selection: assigning values close to 1 for dimensions relevant to $\mathcal{C}$, and close to 0 otherwise. Given an embedding \(\Phi(\mathbf{s})\), the masked representation is denoted by \(\Phi(\mathbf{s}) \odot \mathbf{m}\), where \(\odot\) indicates the Hadamard product.

\textbf{Differentiable Neuron Mask Learning.}
Although the ultimate goal is to approximate a binary mask, direct optimization over discrete values is not feasible due to non-differentiability.
Therefore, we resort to a practical method that employs a smoothing approximation of the discrete Bernoulli distribution~\cite{maddison2017concrete}. Under this framework, we assume each mask $m_i$ follows a hard concrete distribution HardConcrete($\log \alpha_i, \beta_i$) with location $\alpha_i$ and temperature $\beta_i$~\cite{louizos2018learning} as:
\begin{equation}
    s_i = \sigma\left(\frac{1}{\beta_i}  \left( \log \frac{\mu_i}{1-\mu_i} + \log \alpha_i \right)   \right),
    m_i = \min \left( 1, \max \left(0, s_i \left( \xi -\gamma \right)  +\gamma  \right)   \right),
\end{equation} 
where $\sigma$ denotes the sigmoid function. $\xi=1.1$ and $\gamma=-0.1$ are constants, and $\mu_i \sim \mathcal{U}(0,1)$ is the random sample drawn from the uniform distribution. $\alpha_i$ and $\beta_i$ are learnable parameters.
The random variable $s_i$ follows a binary concrete (or Gumbel Softmax) distribution, which is an approximation of the discrete Bernoulli distribution. Samples from the binary concrete distribution are identical to samples from a Bernoulli distribution with probability $\alpha_i$ as $\beta_i \rightarrow 0$, and the location $\alpha_i$ allows for gradient-based optimization through reparametrization tricks~\cite{jang2022categorical}. During the inference stage, the mask $m_i$ could be derived from a hard concrete gate:
\begin{equation}
    m_i = \min \left( 1, \max \left(0, \sigma \left( \log \alpha_i \right) \left( \xi -\gamma \right)  +\gamma  \right)   \right).
\end{equation}

\textbf{Training Dataset Construction.}
We construct two datasets to identify the embedding dimensions most affected by the privacy concept \(\mathcal{C}\). The positive dataset \(D^+ = \{\mathbf{s}_1, \dots, \mathbf{s}_{|D^+|}\}\) consists of sentences that include tokens representing the concept \(\mathcal{C}\). For each sentence \(\mathbf{s}_i \in D^+\), we construct a corresponding negative sample by removing all tokens related to \(\mathcal{C}\), denoted as \(\mathcal{R}(\mathbf{s}_i, \mathcal{C})\). This yields the negative dataset \(D^- = \{\mathcal{R}(\mathbf{s}_i, \mathcal{C}) \mid \mathbf{s}_i \in D^+\}\), where each sentence is identical to its positive counterpart except for the absence of concept-specific tokens.

\textbf{Learning Objective.}  
The neuron mask $\mathbf{m}$ is trained to satisfy two key objectives:  
(i) The masked embedding $\Phi(\mathbf{s}) \odot \mathbf{m}$ should retain sufficient information to distinguish between the positive and negative datasets $D^+$ and $D^-$, respectively; and  
(ii) the mask $\mathbf{m}$ should be sparse, thereby isolating only the most relevant dimensions associated with the privacy-sensitive concept $\mathcal{C}$. To achieve these objectives, we define a composite loss function. The first term is a discriminative loss that encourages separation between $D^+$ and $D^-$:
\begin{equation}
    \label{eq:loss_function}
    \mathcal{L}_{\text{cls}}(\mathbf{m}, \theta) = -\sum_{\mathbf{s}^+ \in D^+} \log P_{\theta}\left(\Phi(\mathbf{s}^+) \odot \mathbf{m} \right) 
    - \sum_{\mathbf{s}^- \in D^-} \log \left(1 - P_{\theta}\left(\Phi(\mathbf{s}^-) \odot \mathbf{m}\right)\right),
\end{equation}
where $P_{\theta}(\cdot)$ denotes the probability predicted by a MLP classifier parameterized by $\theta$. To enforce sparsity in the learned mask, we add an $L_0$-regularization term based on the expected number of active neurons under the hard concrete distribution:
\begin{equation}
    \label{eq:reg}
    \mathcal{L}_{\text{reg}}(\mathbf{m}) = -\frac{1}{|\mathbf{m}|} \sum_{i=1}^{|\mathbf{m}|} 
    \sigma\left(\log \alpha_i - \beta_i \log \left(\frac{-\gamma}{\xi}\right)\right).
\end{equation}
The final objective function jointly optimizes the classification performance and sparsity:
\begin{equation}
    \min_{\mathbf{m}, \theta} \; \mathcal{L}_{\text{cls}}(\mathbf{m}, \theta) + \lambda \mathcal{L}_{\text{reg}}(\mathbf{m}),
\end{equation}
where the regularization coefficient $\lambda$ controls the trade-off between predictive accuracy and the compactness of the neuron mask. For more implementation details, readers are referred to Appendix~\ref{appendix:implementation-detail} and Algorithm~\ref{algorithm:training-pipeline}.

\subsection{Embedding Perturbation with Mahalanobis Mechanism}
\label{method:perturbation}
Having identified the privacy-sensitive embedding dimensions through the learned neuron mask $\mathbf{m}$, we now describe how to perturb the embeddings in a sensitivity-aware manner.  
Specifically, we extend the generalized Laplace mechanism by incorporating a Mahalanobis norm-based perturbation scheme, thereby enabling elliptical noise calibrated by the neuron sensitivity of $\mathbf{m}$. We begin by formally defining the Mahalanobis norm.
\begin{definition}[Mahalanobis Norm]
    For any vector $v \in \mathbb{R}^n$, and a positive definite matrix $\Sigma \in \mathbb{R}^{n \times n}$, its Mahalanobis norm is defined as $\| v\|_M=\sqrt{v^{\intercal}\Sigma^{-1} v}$. 
\end{definition}
Note that for any $\eta > 0$, the Euclidean ball $\{y \in \mathbb{R}^n: | y - x |_2 = \eta\}$ defines a sphere, implying isotropic noise in all directions. In contrast, the Mahalanobis ball $\{y \in \mathbb{R}^n: | y - x |_M = \eta\}$ defines an ellipsoid. This distinction allows us to inject anisotropic noise whose spread adapts to the sensitivity of each embedding dimension.

\begin{definition}[Mahalanobis Mechanism]
Let $\epsilon \geq 0$ be the privacy budget and let $\Sigma \in \mathbb{R}^{n \times n}$ be a symmetric positive definite matrix.  The Mahalanobis mechanism $\mathcal{M}_{\text{Mah}}:\mathbb{R}^n\to\mathbb{R}^n$ perturbs any input $x$ as
\[
    \mathcal{M}_{\text{Mah}}(x)\;=\;x + Z_{\text{Mah}},
\quad
Z_{\text{Mah}}\sim f_Z(z)\;\propto\;\exp\left(-\epsilon\,\|z\|_M\right).
\]
\end{definition}
To calibrate noise based on the learned neuron sensitivity, we define $\Sigma = \operatorname{diag}(m_1 + \delta, \ldots, m_n + \delta)$, where $m_i$ is the $i$-th entry of $\mathbf{m}$ and $\delta = 1\mathrm{e}{-6}$ is a small constant ensuring positive definiteness. For scale compatibility with the isotropic Laplace mechanism, we normalize $\mathbf{m}$ such that $\sum_i m_i = n$ (i.e., $\operatorname{trace}(\Sigma) = \operatorname{trace}(\mathbf{I}_n)$). Algorithm~\ref{algor:noise-sampling} details how to sample $Z_{\text{Mah}}$. We now establish the privacy guarantee of this mechanism:
\begin{theorem}
\label{thm:maha-privacy}
    Given a privacy parameter $\epsilon$, the Mahalanobis mechanism outputting $\Phi'(\mathbf{s}) \sim \mathcal{M}\left(\Phi\left(\mathbf{s}\right)\right)$ fulfills $\epsilon d$-LDP with respect to the Mahalanobis Norm.
\end{theorem}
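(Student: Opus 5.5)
The plan is to show directly that the density of the mechanism's output satisfies a pointwise likelihood-ratio bound, and then integrate over any output set $O$. Write $x=\Phi(\mathbf{s})$ and $x'=\Phi(\mathbf{s}')$ for two inputs. The output $y=\mathcal{M}_{\text{Mah}}(x)$ has density
\[
p_x(y)=\frac{1}{C}\exp\left(-\epsilon\,\|y-x\|_M\right),\qquad C=\int_{\mathbb{R}^n}\exp\left(-\epsilon\,\|z\|_M\right)dz .
\]
The normalizing constant $C$ does not depend on $x$, because the noise is additive and translation-invariant. So the ratio of densities at any $y$ is
\[
\frac{p_x(y)}{p_{x'}(y)}=\exp\left(\epsilon\left(\|y-x'\|_M-\|y-x\|_M\right)\right).
\]

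The first step is to confirm that $C$ is finite and positive, so the density is well defined. Since $\Sigma$ is symmetric positive definite (here diagonal with entries $m_i+\delta>0$), the substitution $z=\Sigma^{1/2}u$ gives $\|z\|_M=\|u\|_2$. Hence
\[
C=\det(\Sigma)^{1/2}\int_{\mathbb{R}^n}e^{-\epsilon\|u\|_2}\,du<\infty
\]
for $\epsilon>0$, the latter being the generalized Laplace normalizer. The same substitution shows that $\mathcal{M}_{\text{Mah}}$ is simply the generalized Laplace mechanism applied in whitened coordinates, which gives an alternative route: the $\epsilon d$-LDP of the generalized Laplace mechanism under $\ell_2$ transfers through the linear bijection $\Sigma^{-1/2}$.

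The second step is to check that $\|\cdot\|_M$ is a genuine norm. It equals $\|\Sigma^{-1/2}v\|_2$, the composition of the Euclidean norm with an invertible linear map, so it satisfies the triangle inequality. Applying the triangle inequality gives
\[
\|y-x'\|_M-\|y-x\|_M\le\|x-x'\|_M ,
\]
so $p_x(y)\le e^{\epsilon\|x-x'\|_M}\,p_{x'}(y)$ for every $y$. Integrating over any measurable $O$ then yields
\[
\Pr[\mathcal{M}(x)\in O]\le e^{\epsilon\, d(x,x')}\Pr[\mathcal{M}(x')\in O],\qquad d(x,x')=\|x-x'\|_M ,
\]
which is exactly $\epsilon d$-LDP. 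The guarantee holds for embeddings $\Phi(\mathbf{s})$ in particular.

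The only delicate point is the well-definedness and $x$-independence of $C$, together with the norm property; both follow from positive definiteness of $\Sigma$, which the choice $\delta>0$ ensures. The normalization $\operatorname{trace}(\Sigma)=n$ only rescales $\Sigma$ and plays no role in the argument. The case $\epsilon=0$ is degenerate, since the density is not normalizable, and the statement should be read for $\epsilon>0$.
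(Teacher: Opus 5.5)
Your proposal is correct and follows essentially the same route as the paper: the translation-invariant normalizing constant cancels in the density ratio, and the triangle inequality for $\|\cdot\|_M$ bounds the exponent by $\epsilon\|x-x'\|_M$. Your additional checks make explicit details the paper leaves implicit: finiteness of $C$ via the whitening substitution, the norm property of $\|\cdot\|_M$, integrating the pointwise bound over $O$, and restricting to $\epsilon>0$.
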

A formal proof is provided in Appendix~\ref{appendix:proof-maha}. Below, we explain how the privacy guarantee of the Mahalanobis mechanism relates to that of the generalized Laplace mechanism.

\textbf{Connecting Privacy Guarantee to Generalized Laplace Mechanism.} 
We now show that the privacy guarantee of the Mahalanobis mechanism is equivalent, up to constant factors, to that of the generalized Laplace mechanism. Since the Mahalanobis and Euclidean norms are equivalent in finite-dimensional spaces, the Mahalanobis mechanism preserves the same asymptotic privacy guarantee, differing only by data-independent constants.


\begin{lemma}\label{lem:norm-equivalence}
Let $\Sigma\!\in\!\mathbb{R}^{n\times n}$ be positive–definite with
\(\operatorname{trace}(\Sigma)=n\).
Assume the smallest eigenvalue of $\Sigma$ is bounded below by $c>0$.
Then, for any vector $v\in\mathbb{R}^n$,
\[
        \frac{\|v\|_{2}}{\sqrt{n}}
        \;\le\;
        \|v\|_{M}
        \;\le\;
        \frac{\|v\|_{2}}{\sqrt{c}}.
\]
\end{lemma}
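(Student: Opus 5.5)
The plan is to reduce both inequalities to the Rayleigh-quotient bounds for the symmetric positive-definite matrix $\Sigma^{-1}$. First, diagonalize $\Sigma = U\Lambda U^{\intercal}$ with $U$ orthogonal and $\Lambda=\operatorname{diag}(\lambda_1,\dots,\lambda_n)$, all $\lambda_i>0$. Then $\Sigma^{-1}=U\Lambda^{-1}U^{\intercal}$. Writing $w=U^{\intercal}v$, so that $\|w\|_2=\|v\|_2$, we get
\[
\|v\|_M^2=\sum_{i=1}^n \frac{w_i^2}{\lambda_i}.
\]
Hence
\[
\frac{\|v\|_2^2}{\lambda_{\max}}\le \|v\|_M^2\le \frac{\|v\|_2^2}{\lambda_{\min}}.
\]
In the paper's case $\Sigma$ is already diagonal, so $U=\mathbf{I}_n$ and this step is immediate.

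For the upper bound, I will use the hypothesis $\lambda_{\min}\ge c$. This gives $\|v\|_M^2\le \|v\|_2^2/c$, and taking square roots yields $\|v\|_M\le \|v\|_2/\sqrt{c}$.

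For the lower bound, the only ingredient is a bound on the largest eigenvalue from the trace constraint. All eigenvalues are positive and sum to $\operatorname{trace}(\Sigma)=n$, so $\lambda_{\max}\le \sum_i\lambda_i=n$. Substituting this gives $\|v\|_M^2\ge \|v\|_2^2/n$, and taking square roots yields $\|v\|_2/\sqrt{n}\le\|v\|_M$.

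No step is a real obstacle. The only point needing care is noticing that positive definiteness makes every eigenvalue nonnegative, so the trace dominates the largest eigenvalue; this is what produces the factor $\sqrt{n}$ rather than something dimension-free. I will also remark that these constants are data-independent. Combined with Theorem~\ref{thm:maha-privacy}, an $\epsilon d$-LDP guarantee with respect to $\|\cdot\|_M$ then implies an $(\epsilon/\sqrt{c})d$-LDP guarantee with respect to $\ell_2$, which gives the intended comparison with the generalized Laplace mechanism.
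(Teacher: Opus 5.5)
Your proposal is correct and follows the paper's proof essentially step for step. The paper also diagonalizes $\Sigma = Q\Lambda Q^{\top}$ and writes $\|v\|_M^2=\sum_i \tilde v_i^2/\xi_i$ with $\tilde v = Q^{\top}v$; it then uses $\xi_i \ge c$ for the upper bound and $\xi_i \le \operatorname{trace}(\Sigma) = n$ for the lower bound, exactly as you do, including the same implicit reliance on $\Sigma$ being symmetric for the orthogonal diagonalization.
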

Building on this, the following lemma shows that the privacy-loss exponent under the Mahalanobis mechanism is bounded between two exponents based on the Euclidean norm:

\begin{lemma}\label{lem:exp-sandwich}
Assume $\operatorname{trace}(\Sigma)=n$ and that the smallest eigenvalue of
$\Sigma$ is bounded below by a constant $c>0$.
Then, for every input text $\mathbf{s},\mathbf{s}' \in \mathcal{S}$ and every $\epsilon \geq 0$,
\[
   \exp\left(\frac{\epsilon}{\sqrt{n}}\,
                \|\Phi(\mathbf{s})-\Phi(\mathbf{s}')\|_{2}\right)
   \;\le\;
   \exp\left(\epsilon\,
               \|\Phi(\mathbf{s})-\Phi(\mathbf{s'})\|_{M}\right)
   \;\le\;
   \exp\!\left(\frac{\epsilon}{\sqrt{c}}\,
               \|\Phi(\mathbf{s})-\Phi(\mathbf{s'})\|_{2}\right).
\]
\end{lemma}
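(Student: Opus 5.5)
The plan is to reduce Lemma~\ref{lem:exp-sandwich} directly to Lemma~\ref{lem:norm-equivalence} and then use monotonicity of the exponential. Fix arbitrary inputs $\mathbf{s},\mathbf{s}'\in\mathcal{S}$ and set $v=\Phi(\mathbf{s})-\Phi(\mathbf{s}')\in\mathbb{R}^n$. The hypotheses of Lemma~\ref{lem:exp-sandwich} match those of Lemma~\ref{lem:norm-equivalence}: $\Sigma$ is positive definite, $\operatorname{trace}(\Sigma)=n$, and $\lambda_{\min}(\Sigma)\ge c>0$. Applying Lemma~\ref{lem:norm-equivalence} to this $v$ gives
\[
\frac{\|v\|_2}{\sqrt{n}}\;\le\;\|v\|_M\;\le\;\frac{\|v\|_2}{\sqrt{c}}.
\]

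Next, I multiply all three terms by $\epsilon\ge 0$. This preserves the inequalities because $\epsilon$ is nonnegative; when $\epsilon=0$, all three exponents are $0$ and the claim holds trivially as equality. Then I apply $t\mapsto e^{t}$, which is monotonically increasing on $\mathbb{R}$, to each term. This yields exactly the claimed chain of inequalities for this pair $(\mathbf{s},\mathbf{s}')$. Since the pair was arbitrary, the statement holds for all $\mathbf{s},\mathbf{s}'$ and all $\epsilon\ge 0$.

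There is no real obstacle here, since all the analytic content sits in Lemma~\ref{lem:norm-equivalence}. If that lemma also had to be justified, I would diagonalize $\Sigma=U\Lambda U^\intercal$ and write $\|v\|_M^2=\sum_i \lambda_i^{-1}(U^\intercal v)_i^2$. Each eigenvalue $\lambda_i$ lies in $[c,n]$: the lower bound is the hypothesis, and the upper bound holds because the eigenvalues are positive and sum to $\operatorname{trace}(\Sigma)=n$. Hence $\lambda_i^{-1}\in[1/n,1/c]$. Since $U$ is orthogonal, $\|U^\intercal v\|_2=\|v\|_2$, and the two norm bounds follow after taking square roots. The only care needed is the direction of the inequalities under inversion of the eigenvalues and under multiplication by $\epsilon$, which is where I would double-check signs.
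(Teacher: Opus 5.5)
Your proposal is correct and matches the paper's proof: set $v=\Phi(\mathbf{s})-\Phi(\mathbf{s}')$, apply Lemma~\ref{lem:norm-equivalence}, multiply by $\epsilon\ge 0$, and use that $t\mapsto e^{t}$ is increasing. Your sketch of Lemma~\ref{lem:norm-equivalence} is also the paper's argument: diagonalize $\Sigma$ and use that each eigenvalue lies in $[c,n]$.
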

Together, these lemmas show that the Mahalanobis mechanism achieves a privacy guarantee comparable to that of the generalized Laplace mechanism under the same privacy budget~$\epsilon$. The detailed proof in the section is deferred to Appendix~\ref{appendix:proofs}.

\section{Experimental Evaluation}
\subsection{Experiment Setup}
\label{sec:experiemnt-setup}
\noindent\textbf{Datasets.}
Following  prior work on embedding inversion~\cite{morris2023text,kim2022toward}, We evaluate six benchmark datasets with downstream labels (for privacy-utility tradeoff) and two real-world datasets, PII-Masking-300K~\cite{ai4privacy2024piimasking} and MIMIC-III~\cite{mimic}, covering 27 PII types and clinical notes. We extract the named entities as sensitive information for these datasets using named entity recognition models (detailed in Appendix~\ref{sec:sensitive-token}).


\noindent\textbf{Attack models.}
Three attack models are employed to access the privacy risks of text embedding, including Vec2text~\cite{morris2023text}, GEIA~\cite{GEIA}, and MLC~\cite{information_leak}. Vec2text and GEIA are sentence-level attack methods that leverage pre-trained LLMs to reconstruct the input sentence. MLC utilizes a three-layer MLP to predict the existence of individual words. Due to its superior performance, Vec2text serves as our default attack model in subsequent experiments. 

\noindent \textbf{Defense methods.}
We compare our proposed \model{} with two established differential privacy approaches: generalized Laplace mechanism~\cite{wu2017bolt} (LapMech) and Purkayastha mechanism~\cite{du2023sanitizing} (PurMech). LapMech introduces privacy by sampling noise from the Laplace distribution and adding it to the embedding vectors, while PurMech utilizes Purkayastha directional noise to perturb embeddings while preserving semantic meaning. These baselines represent the state-of-the-art in embedding privacy protection methods and provide strong comparisons for evaluating our approach.

\noindent\textbf{Evaluation Metrics.}
To quantify privacy risk, we use two measures: 
(1) \emph{Leakage}: the attack model's accuracy in predicting sensitive tokens (lower is better); 
(2) \emph{Confidence}: the probability of the attack model to predict the sensitive tokens (lower indicates less exposure).  
For downstream utility, we report each dataset’s standard task metric (e.g., NDCG or correlation; see Appendix Table~\ref{tab:dataset_app}). Please refer to Appendix~\ref{app:eval-metrics} for a detailed description of all the evaluation metrics.

\noindent \textbf{Embedding models.}
We evaluate three widely used embedding models: GTR-base~\cite{GTR}, Sentence-T5~\cite{sentence-T5}, and SBERT~\cite{sentenceBERT}. GTR-base is the default model due to its higher vulnerability to the Vec2text attack. 

\subsection{Privacy-Utility Trade-off Analysis}
\label{subsec:privacy-utility-tradeoff}
We evaluate the privacy-utility trade-off across different defense methods and privacy budgets of $\epsilon$ using the STS12 and FIQA datasets. Note that we vary the values of $\epsilon \in \{5,10,20,30,40\}$ following the settings of prior works~\cite{feyisetan2020privacy,feyisetan2019leveraging}. The results are presented in Table~\ref{tab:main-results}. Here, $\epsilon=\infty$ denotes the unprotected embedding. In comparison with the baseline methods (LapMech and PurMech), \model{} demonstrates consistent superiority in minimizing privacy leakage while maintaining downstream utility. On the STS12 dataset at $\epsilon=10$, \model{} reduces privacy leakage from 60\% to 19\%, whereas alternative methods achieve only a 22\% reduction. Meanwhile, \model{} maintains 65\% downstream utility while other methods decline to 60\%. 
Although the marginal benefits diminish as $\epsilon$ increases, \model{}'s superior performance remains consistent across varying privacy budgets and datasets. 
We evaluate \model{} on four more datasets and two real-world cases with sensitive attributes. As detailed in Appendix~\ref{appendix:more-main-exp} and~\ref{subsec:different-NER}, \model{} consistently reduces privacy leakage and outperforms baseline methods.

\begin{table*}[ht!]
\centering
\caption{Privacy-utility tradeoff across various defense methods. The mean and standard deviation of 5 runs are reported in percentages(\%).}
\resizebox{\linewidth}{!}{
\begin{tabular}{c|c|ccc|ccc|ccc}
\toprule
& & \multicolumn{6}{c|}{\textbf{Privacy Metrics}} & \multicolumn{3}{c}{\textbf{Utility Metric}} \\ \cmidrule(lr){3-8} \cmidrule(lr){9-11}
& & \multicolumn{3}{c|}{\textbf{Leakage $\downarrow$}} & \multicolumn{3}{c|}{\textbf{Confidence $\downarrow$}}  & \multicolumn{3}{c}{\textbf{Downstream $\uparrow$}} \\ \cmidrule(lr){3-5} \cmidrule(lr){6-8} \cmidrule(lr){9-11} 
\textbf{Dataset} & $\epsilon$ & \textbf{LapMech} & \textbf{PurMech} & \textbf{\model{}} & \textbf{LapMech} & \textbf{PurMech} & \textbf{\model{}} & \textbf{LapMech} & \textbf{PurMech} & \textbf{\model{}} \\
\midrule
\multirow{6}{*}{STS12}& 5 & 7.36 \std{0.61} & 7.42 \std{0.49} & \textbf{4.34} \std{0.51} & 6.70 \std{0.32} & 6.80 \std{0.29} & \textbf{6.41} \std{0.23} & 29.28 \std{0.00} & 29.31 \std{0.00} & \textbf{34.12} \std{0.00} \\
& 10 & 22.34 \std{1.38} & 22.66 \std{1.15} & \textbf{19.31} \std{0.21} & 9.39 \std{0.17} & 9.42 \std{0.17} & \textbf{8.91} \std{0.12} & 60.72 \std{0.00} & 60.72 \std{0.00} & \textbf{65.27} \std{0.00} \\
& 20 & 38.17 \std{0.86} & 38.04 \std{0.71} & \textbf{36.98} \std{0.45} & 24.70 \std{0.75} & 24.74 \std{0.71} & \textbf{23.85} \std{0.43} & 72.47 \std{0.00} & 72.47 \std{0.00} & \textbf{73.25} \std{0.00} \\
& 30 & 44.74 \std{0.43} & 44.76 \std{0.49} & \textbf{43.81} \std{0.24} & 34.59 \std{0.32} & 34.59 \std{0.24} & \textbf{34.16} \std{0.76} & 73.68 \std{0.00} & 73.68 \std{0.00} & \textbf{74.04} \std{0.00} \\
& 40 & 48.48 \std{0.60} & 48.34 \std{0.57} & \textbf{47.54} \std{0.44} & 38.75 \std{0.80} & 38.82 \std{0.79} & \textbf{38.49} \std{0.76} & 73.98 \std{0.00} & 73.98 \std{0.00} & \textbf{74.15} \std{0.00} \\
\cmidrule{2-11}
& $\infty$ & \multicolumn{3}{c|}{60.09} & \multicolumn{3}{c|}{47.81}   & \multicolumn{3}{c}{74.25} \\ \midrule \midrule
\multirow{6}{*}{FIQA}& 5 & 12.56 \std{0.98} & 13.01 \std{1.40} & \textbf{8.48} \std{0.30} & 6.67 \std{0.51} & 6.70 \std{0.49} & \textbf{6.18} \std{0.25} & 10.64 \std{0.24} & 10.63 \std{0.25} & \textbf{14.87} \std{0.15} \\
& 10 & 35.17 \std{1.46} & 35.31 \std{0.86} & \textbf{31.62} \std{0.75} & 16.70 \std{0.74} & 16.55 \std{0.66} & \textbf{13.45} \std{0.38} & 21.74 \std{0.36} & 21.76 \std{0.29} & \textbf{23.45} \std{0.29} \\
& 20 & 55.69 \std{1.05} & 55.38 \std{1.26} & \textbf{53.41} \std{1.89} & 35.32 \std{0.74} & 35.25 \std{0.78} & \textbf{33.77} \std{0.73} & 32.22 \std{0.14} & 32.23 \std{0.13} & \textbf{32.65} \std{0.19} \\
& 30 & 64.12 \std{0.82} & 64.13 \std{0.85} & \textbf{63.51} \std{0.69} & 43.35 \std{1.50} & 43.56 \std{1.53} & \textbf{42.21} \std{0.91} & 33.24 \std{0.03} & 33.26 \std{0.04} & \textbf{33.58} \std{0.13} \\
& 40 & 68.85 \std{1.26} & 68.63 \std{1.36} & \textbf{68.13} \std{0.80} & 48.07 \std{1.08} & 47.77 \std{0.78} & \textbf{46.65} \std{0.55} & 33.50 \std{0.14} & 33.52 \std{0.15} & \textbf{33.85} \std{0.11} \\ \cmidrule{2-11}
& $\infty$ & \multicolumn{3}{c|}{77.35} & \multicolumn{3}{c|}{54.48}  &  \multicolumn{3}{c}{33.56} \\ 
\bottomrule
\end{tabular}
}
\label{tab:main-results}
\end{table*}

\subsection{Defense Robustness against Different Threat Models}
\label{subsec:defend-different-attack-model}
While previous experiments focus on Vec2text, it is important to assess \model{} under varied threat models.
We evaluate privacy leakage under three embedding inversion attack models: MLC~\cite{information_leak}, GEIA~\cite{GEIA}, and Vec2text~\cite{morris2023text}. Since changing the attack model does not impact downstream utility, we report only the Leakage metric. As shown in Table~\ref{tab:different_attack_model}, \model{} consistently outperforms LapMech and PurMech across all attack models by a significant margin. Additionally, we notice that complex attack models, such as Vec2text and GEIA, are more susceptible to embedding perturbation, exhibiting substantial leakage reductions of 92\% and 72\% respectively at $\epsilon=5$.  In contrast, the shallow MLC model demonstrates less vulnerability to our defense method.
The results suggest that \model{} offers a more resilient defense against diverse embedding inversion threats.

\begin{table*}[ht!]
\centering
\vspace{-5pt}
\caption{Defense performance with respect to different attack models. We report the Leakage metric in percentage (\%) on the STS12 dataset. In addition, we highlight the relative performance compared to the non‐protected embedding in red.}
\resizebox{\linewidth}{!}{
\begin{tabular}{c|c|cccccc}
\toprule
 &  & \multicolumn{3}{c}{\(\epsilon = 5\)} & \multicolumn{3}{c}{\(\epsilon = 10\)} \\ 
 \cmidrule(lr){3-5} \cmidrule(lr){6-8}
 \textbf{Attack Models} & \(\epsilon = \infty\)  
   & \textbf{LapMech} & \textbf{PurMech} & \textbf{\model{}} 
   & \textbf{LapMech}  & \textbf{PurMech} & \textbf{\model{}} \\
\midrule
Vec2text~\cite{morris2023text} 
  & 60.09 
  & 7.36 \textcolor{red}{(-87.75\%)} 
  & 7.42 \textcolor{red}{(-87.65\%)} 
  & \textbf{4.34} \textcolor{red}{\textbf{(-92.78\%)}}  
  & 22.34 \textcolor{red}{(-62.82\%)} 
  & 22.66 \textcolor{red}{(-62.29\%)} 
  & \textbf{19.31} \textcolor{red}{\textbf{(-67.86\%)}} \\
\midrule
GEIA~\cite{GEIA} 
  & 25.34 
  & 12.30 \textcolor{red}{(-51.46\%)} 
  & 12.36 \textcolor{red}{(-51.22\%)} 
  & \textbf{7.08} \textcolor{red}{\textbf{(-72.06\%)}} 
  & 20.60 \textcolor{red}{(-18.71\%)} 
  & 21.21 \textcolor{red}{(-16.30\%)} 
  & \textbf{15.82} \textcolor{red}{\textbf{(-37.57\%)}}  \\
\midrule
MLC~\cite{information_leak} 
  & 53.20 
  & 19.39 \textcolor{red}{(-63.55\%)} 
  & 19.80 \textcolor{red}{(-62.78\%)} 
  & \textbf{17.63} \textcolor{red}{\textbf{(-66.86\%)}}  
  & 32.74 \textcolor{red}{(-38.45\%)} 
  & 32.68 \textcolor{red}{(-38.57\%)} 
  & \textbf{29.59} \textcolor{red}{\textbf{(-44.38\%)}} \\ 
\bottomrule
\end{tabular}
}
\label{tab:different_attack_model}
\vspace{-8pt}
\end{table*}

\subsection{Evaluation on Real-world Privacy Threats}
\label{subsec:different-NER}
We evaluated \model{}'s resilience to inversion attacks across various data domains and privacy categories. 
This evaluation used the PII-Masking 300K dataset~\cite{ai4privacy2024piimasking}, and MIMIC-III clinical notes~\cite{mimic}. The results in Table~\ref{tab:different_sensitive_words} demonstrate significant privacy vulnerabilities in unprotected embeddings and the superior protection offered by our approach. In the MIMIC-III dataset, unprotected models exhibited severe privacy leakage with attack models successfully extracting sensitive attributes at concerning rates: 88\% for sex, 70\% for diseases, and 82\% for symptoms. Under equivalent perturbation budgets of $\epsilon$, \model{} reduces sex attribute leakage from 88\% to 28\%, while both LapMech and PurMech achieve only modest reductions to 43\%. This superior protection generalizes across all evaluated privacy categories.

\begin{table}[ht]
\centering
\caption{Defense performance on different categories of sensitive information. We report the Leakage metric in percentage (\%) with $\epsilon=10$.}
\vspace{5pt}
\small
\resizebox{0.8\textwidth}{!}{
\begin{tabular}{ccccccccc}
\toprule
 \textbf{Dataset} &  \multicolumn{4}{c}{\textbf{PII-300K}} &  \multicolumn{4}{c}{\textbf{MIMIC-III}} \\ \cmidrule(lr){1-1} \cmidrule(lr){2-5} \cmidrule(lr){6-9}
 \textbf{Category} & \textbf{Sex} & \textbf{City} & \textbf{State} & \textbf{Country}  & \textbf{Age} & \textbf{Sex} & \textbf{Disease} & \textbf{Symptom}\\ \midrule
\rowcolor{gray!30} 
Non-protected & 86.12 & 68.45 & 75.43 & 84.07 & 58.49 & 88.40 & 70.43 & 82.76 \\ \midrule 
LapMech & 42.35 & 33.39 & 36.63 & 40.37 & 31.88 & 43.38 & 23.32 & 38.17 \\ \midrule
PurMech & 43.53 & 34.10 & 38.45 & 41.45 & 31.89 & 43.38 & 22.86 & 31.30 \\ \midrule
\model{} & \textbf{33.76} & \textbf{28.76} & \textbf{33.62} & \textbf{35.19} & \textbf{28.98} & \textbf{28.45} & \textbf{18.28} & \textbf{29.35} \\ \bottomrule
\end{tabular}
}
\label{tab:different_sensitive_words}
\end{table}


\subsection{Comparing \model{} with White-Box Defense}
Our defense framework is predicated on the hypothesis that sensitive information is encoded within specific dimensions of the embedding space. Consequently, selectively perturbing these dimensions could effectively mitigate inversion attacks. This motivates two key questions:
(i) How effective could \model{} be under perfect knowledge of embedding sensitivity? and
(ii) How closely can our black-box approach approximate this ideal?
To answer these questions, we design \textbf{\model{}-WB}, an empirical upper-bound defense assuming white-box access to the attack model. 

\textbf{Extending \model{} to White-Box Defense.} 
For each sensitive token, we use Integrated Gradients~\cite{IG} to compute the gradient of the model’s output with respect to the input embedding, treating sensitivity estimation as a feature attribution problem. 
Each dimension’s attribution score reflects its influence on the prediction.  
Instead of applying the neuron mask as in the original \model{}, the white-box method uses the attribution score for sampling noise from the Mahalanobis mechanism.

\textbf{Results.}
As shown in Table~\ref{tab:oracle-defense}, \model{}-WB consistently achieves the best privacy-utility tradeoff across different datasets and privacy budgets. The promising  result of \model{}-WB verifies our hypothesis and servers as a strong upper bound. 
Importantly, we notice \model{} closely approaches this white-box defense performance, especially at $\epsilon=20,30,40$, with only small gaps in both leakage and utility. This suggests that \model{} is able to effectively approximate the white-box sensitivity estimation without access to the attack model, which is crucial in realistic threat settings.

\begin{table*}[ht]
\centering
\small
\caption{Comparison of \model{} with its white-box variant and LapMech to assess how well \model{} approximates an ideal defense with perfect knowledge of sensitive dimensions. Results are reported in terms of privacy leakage and downstream utility under varying privacy budgets $\epsilon$.}
\resizebox{\textwidth}{!}{
\begin{tabular}{c|c|ccccc|ccccc}
\toprule
&
& \multicolumn{5}{c|}{\textbf{Leakage} $\downarrow$(\%)} 
& \multicolumn{5}{c}{\textbf{Downstream} $\uparrow$ (\%)} \\ \cmidrule(lr){3-7} \cmidrule(lr){8-12}
\textbf{Dataset}& \textbf{Method} & $\epsilon=5$ & 10 & 20 & 30 & 40 & $\epsilon=5$ & 10 & 20 & 30 & 40 \\
\midrule
\multirow{3}{*}{\textbf{STS12}} 
& LapMech       
    & 7.36 & 22.34 & 38.17 & 44.74 & 48.48 
    & 29.28 & 60.72 & 72.47 & 73.68 & 73.98 \\
& \model{}          
    & 4.34 & 19.31 & 36.98 & 43.81 & 47.54 
    & 34.12 & 65.27 & 73.25 & 74.04 & 74.15 \\
& \cellcolor{gray!10}\model{}-WB   
    & \cellcolor{gray!10}1.43 & \cellcolor{gray!10}12.01 & \cellcolor{gray!10}33.67 & \cellcolor{gray!10}42.95 & \cellcolor{gray!10}47.13 
    & \cellcolor{gray!10}40.92 & \cellcolor{gray!10}67.45 & \cellcolor{gray!10}74.01 & \cellcolor{gray!10}74.13 & \cellcolor{gray!10}74.10 \\
\midrule
\multirow{3}{*}{\textbf{FIQA}} 
& LapMech       
    & 12.56 & 35.17 & 55.69 & 64.12 & 68.85 
    & 10.64 & 21.74 & 32.22 & 33.24 & 33.50 \\
& \model{}          
    & 8.48  & 31.62 & 53.41 & 63.51 & 68.13 
    & 14.87 & 23.45 & 32.65 & 33.58 & 33.85 \\
& \cellcolor{gray!10}\model{}-WB   
    & \cellcolor{gray!10}3.03  & \cellcolor{gray!10}22.35 & \cellcolor{gray!10}51.27 & \cellcolor{gray!10}62.70 & \cellcolor{gray!10}67.92 
    & \cellcolor{gray!10}14.58 & \cellcolor{gray!10}26.46 & \cellcolor{gray!10}32.87 & \cellcolor{gray!10}33.55 & \cellcolor{gray!10}33.58 \\
\bottomrule
\end{tabular}
}
\label{tab:oracle-defense}
\end{table*}

\subsection{Qualitative Analysis of Privacy-Sensitive Dimensions}

We present a qualitative analysis to better understand the quality of the privacy-sensitive dimensions identified by \model{} for specific privacy concepts. To enhance interpretability and visualization, we focus on individual words rather than aggregated token sets as in prior experiments. Figure~\ref{fig:case_study} visualizes the learned neuron masks for six semantically coherent groups: weekdays, countries, months, U.S.-related terms, gender-related terms, and numbers. The x-axis shows the union of the top-5 neuron indices most strongly associated with each word. We have the following two findings: 

\textbf{1) Semantically related words activate overlapping privacy-sensitive dimensions.}  
As depicted in Figure~\ref{fig:case_study}, we found that words with similar semantics, such as weekdays or countries, tend to cluster around the similar embedding dimensions. The clustering behavior verifies the quality of our proposed neuron mask detection process, demonstrating that it effectively localizes meaningful, non-random privacy signals that align with linguistic structure.

\textbf{2) \model{} implicitly protects semantically similar tokens.} 
We hypothesize that protecting a token's privacy-sensitive dimensions also benefits semantically similar tokens, as they often share overlapping dimensions. To test this, we apply the learned neuron mask for each target token and evaluate leakage reduction for three types: the target, semantically similar, and unrelated tokens. Leakage mitigation is quantified as the relative reduction of the \textit{Leakage} metric compared to the non-protected embedding. As Table~\ref{tab:sematic-similar-exp} shows, \model{} substantially reduces leakage for similar tokens (e.g., 46.2\% for “Weekdays”), even though only the target was protected. These results suggest that although our privacy-sensitive dimensions are identified based on explicitly defined tokens, it implicitly extends protection to a broader, more generalizable privacy concept.


\begin{figure}[t]
  \centering
  \begin{minipage}[t]{0.48\linewidth}
    \centering
    \vspace{-10pt}
    \includegraphics[width=\linewidth]{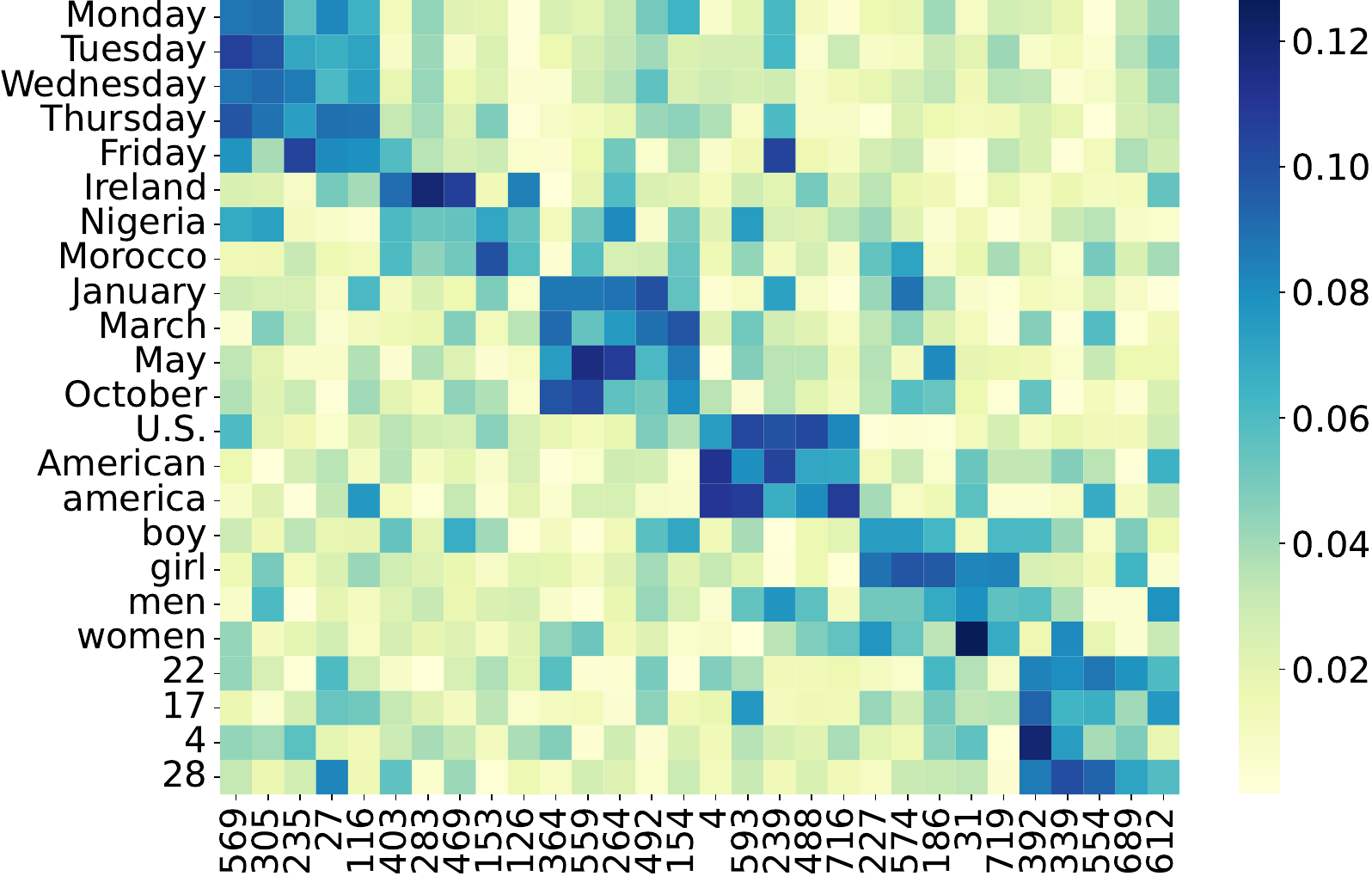}
    \caption{Visualization of the learned neuron mask by \model{} for individual tokens, where larger values represent higher privacy sensitivity.}
    \label{fig:case_study}
  \end{minipage}%
  \hfill
\begin{minipage}[t]{0.48\linewidth}
  \centering
  \tiny
  \captionof{table}{Leakage mitigation rates achieved by \model{} with $\epsilon=10$ compared to non-protected embeddings. Results are evaluated across three token types: target tokens, semantically similar tokens, and unrelated (other) tokens under different privacy categories.}
  \resizebox{\linewidth}{!}{
    \begin{tabular}{c
    |c|c|c}
      \toprule
       & \textbf{Target} & \textbf{Similar} & \textbf{Other} \\ \midrule
      \textbf{Weekdays} & -76.2\% & -46.2\% & -11.7\% \\
      \textbf{Country}  & -64.3\% & -36.2\% & -29.1\% \\
      \textbf{Months}     & -72.5\% & -42.8\% & -12.6\% \\
      \textbf{Gender}   & -61.0\% & -40.5\% & -18.3\% \\
      \textbf{City}     & -70.2\% & -39.8\% & -14.7\% \\
      \bottomrule
    \end{tabular}
  }
  \label{tab:sematic-similar-exp}
\end{minipage}
\end{figure}
\section{Related Work}
\textbf{Inversion Attacks on Text Embeddings.}
Text embeddings have been shown to pose serious privacy risks, as they can unintentionally encode and expose sensitive attributes and content~\cite{privacy_risks, song2019auditing, lyu2020towards, privacy_ear}.
For example, prior work~\cite{privacy_risks} demonstrated that keywords can be partially recovered from text embeddings using annotated external datasets. Similarly, attribute inference and embedding inversion attacks have been used to extract unordered sets of words from sentence representations~\cite{information_leak}. GEIA~\cite{GEIA} extended these attacks by introducing a generative approach that reconstructs entire input sequences. More recently, Vec2Text~\cite{morris2023text} showed that embeddings from commercial APIs (e.g., OpenAI) can be inverted with high accuracy. These findings underscore the need for robust privacy-preserving embedding methods.

\textbf{Privacy-preserving Text Embeddings.} 
To mitigate privacy risks in textual representations, prior work has introduced various noise injection mechanisms for token- and sentence-level embeddings. DPNR~\cite{lyu2020towards} randomly masks input tokens and adds Laplace noise to the resulting embeddings. Feyisetan et al.~\cite{feyisetan2019leveraging} apply a generalized Laplace mechanism to perturb token embeddings under metric local differential privacy (LDP). For sentence embeddings, Lyu et al.~\cite{lyu2020differentially} directly inject Laplace noise into BERT-based vectors. Laplace-based mechanisms have also been employed to defend against inversion~\cite{morris2023text}, membership inference~\cite{information_leak}, and attribute inference~\cite{privacy_ear} attacks. Recent work such as the Purkayastha mechanism~\cite{du2023sanitizing} further refines Laplace perturbation for enhanced privacy guarantees. 
\section{Conclusion}
\label{sec:limitation}
We introduced \model{}, a framework that enhances privacy in text embeddings by selectively applying sensitivity-guided elliptical noise. By identifying and perturbing privacy-sensitive embedding dimensions, \model{} resists embedding inversion attacks while preserving utility. Experiments across models, datasets, and threat scenarios demonstrate its effectiveness in improving the privacy-utility tradeoff. As embeddings become central to real-world systems, embedding-level privacy is essential. We see \model{} as a step toward controllable, concept-aware protection, and hope it encourages research into adaptive and accountable defenses for sensitive NLP.

\label{sec:conclusion}

\subsubsection*{Acknowledgments}

This material is based upon work supported by National Science and Technology Council, ROC under grant number 114-2221-E-002-134-MY3 and by Taiwan Centers of Excellence (TCE)
\newpage
\section*{Ethical Considerations}
\label{sec:ethical}
While \model{} is designed to enhance privacy in text embedding applications, its deployment must be guided by ethical considerations. First, although our method reduces the risk of embedding inversion, it does not eliminate all privacy threats, and may offer a false sense of security if used without awareness of its limitations. Practitioners should carefully evaluate the privacy requirements of their specific context and avoid over-relying on embedding anonymization as a substitute for broader data governance and access controls.

Second, our framework is concept-driven and depends on predefining sensitive information categories. This raises fairness concerns: groups or attributes not explicitly included in the sensitive concept space may receive less protection, potentially reinforcing systemic biases or exposing vulnerable populations. Future implementations should strive for inclusiveness in concept selection and explore concept-agnostic sensitivity detection to mitigate this risk.

Finally, as with any privacy-preserving technique, \model{} could be misused—for example, to evade moderation or mask malicious content. We encourage responsible use aligned with principles of transparency, accountability, and user consent, especially in high-stakes domains such as healthcare, education, or law.

\section*{Reproducibility Statement}

All essential details required to reproduce our main results are provided in this paper. Appendix~\ref{appendix:implementation-detail} offers comprehensive descriptions of the model architectures and training procedures, Appendix~\ref{appendix:attack-models} details the attack configurations used in our experiments, and Appendix~\ref{app:eval-metrics} presents the formal definitions of all evaluation metrics. In addition, we plan to publicly release our code in the near future to further facilitate reproducibility and future research.
\bibliography{iclr2026_conference}

@inproceedings{lyu2020differentially,
  title={Differentially Private Representation for NLP: Formal Guarantee and An Empirical Study on Privacy and Fairness},
  author={Lyu, Lingjuan and He, Xuanli and Li, Yitong},
  booktitle={Findings of the Association for Computational Linguistics: EMNLP 2020},
  pages={2355--2365},
  year={2020}
}

@inproceedings{wu2017bolt,
  title={Bolt-on differential privacy for scalable stochastic gradient descent-based analytics},
  author={Wu, Xi and Li, Fengan and Kumar, Arun and Chaudhuri, Kamalika and Jha, Somesh and Naughton, Jeffrey},
  booktitle={Proceedings of the 2017 ACM International Conference on Management of Data},
  pages={1307--1322},
  year={2017}
}

@inproceedings{IG,
  title={Axiomatic attribution for deep networks},
  author={Sundararajan, Mukund and Taly, Ankur and Yan, Qiqi},
  booktitle={International conference on machine learning},
  pages={3319--3328},
  year={2017},
  organization={PMLR}
}

@inproceedings{feyisetan2019leveraging,
  title={Leveraging hierarchical representations for preserving privacy and utility in text},
  author={Feyisetan, Oluwaseyi and Diethe, Tom and Drake, Thomas},
  booktitle={2019 IEEE International Conference on Data Mining (ICDM)},
  pages={210--219},
  year={2019},
  organization={IEEE}
}

@misc{microsoft_language_overview,
  author       = {{Microsoft Corporation}},
  title        = {Language Service overview},
  howpublished = {\url{https://learn.microsoft.com/en-us/azure/ai-services/language-service/overview}},
}

@inproceedings{jang2022categorical,
  title={Categorical Reparameterization with Gumbel-Softmax},
  author={Jang, Eric and Gu, Shixiang and Poole, Ben},
  booktitle={International Conference on Learning Representations},
  year={2022}
}

@inproceedings{louizos2018learning,
  title={Learning Sparse Neural Networks through L\_0 Regularization},
  author={Louizos, Christos and Welling, Max and Kingma, Diederik P},
  booktitle={International Conference on Learning Representations},
  year={2018}
}

@inproceedings{maddison2017concrete,
  title={The concrete distribution: A continuous relaxation of discrete random variables},
  author={Maddison, C and Mnih, A and Teh, Y},
  booktitle={Proceedings of the international conference on learning Representations},
  year={2017},
  organization={International Conference on Learning Representations}
}

@misc{ai4privacy2024piimasking,
  author       = {AI4Privacy Team},
  title        = {PII Masking 300k Dataset},
  year         = 2023,
  url          = {https://huggingface.co/datasets/ai4privacy/pii-masking-300k},
  note         = {Licensed under Apache License 2.0. Accessed on: Sep 30, 2024.}
}

@inproceedings{huang-etal-2024-transferable,
    title = "Transferable Embedding Inversion Attack: Uncovering Privacy Risks in Text Embeddings without Model Queries",
    author = "Huang, Yu-Hsiang  and
      Tsai, Yuche  and
      Hsiao, Hsiang  and
      Lin, Hong-Yi  and
      Lin, Shou-De",
    editor = "Ku, Lun-Wei  and
      Martins, Andre  and
      Srikumar, Vivek",
    booktitle = "Proceedings of the 62nd Annual Meeting of the Association for Computational Linguistics (Volume 1: Long Papers)",
    month = aug,
    year = "2024",
    address = "Bangkok, Thailand",
    publisher = "Association for Computational Linguistics",
    url = "https://aclanthology.org/2024.acl-long.230",
    pages = "4193--4205",
}

@article{kasiviswanathan2011can,
  title={What can we learn privately?},
  author={Kasiviswanathan, Shiva Prasad and Lee, Homin K and Nissim, Kobbi and Raskhodnikova, Sofya and Smith, Adam},
  journal={SIAM Journal on Computing},
  volume={40},
  number={3},
  pages={793--826},
  year={2011},
  publisher={SIAM}
}

@inproceedings{chatzikokolakis2013broadening,
  title={Broadening the scope of differential privacy using metrics},
  author={Chatzikokolakis, Konstantinos and Andr{\'e}s, Miguel E and Bordenabe, Nicol{\'a}s Emilio and Palamidessi, Catuscia},
  booktitle={Privacy Enhancing Technologies: 13th International Symposium, PETS 2013, Bloomington, IN, USA, July 10-12, 2013. Proceedings 13},
  pages={82--102},
  year={2013},
  organization={Springer}
}

@inproceedings{alvim2018local,
  title={Local differential privacy on metric spaces: optimizing the trade-off with utility},
  author={Alvim, M{\'a}rio and Chatzikokolakis, Konstantinos and Palamidessi, Catuscia and Pazii, Anna},
  booktitle={2018 IEEE 31st Computer Security Foundations Symposium (CSF)},
  pages={262--267},
  year={2018},
  organization={IEEE}
}

@inproceedings{privacy_ear,
  title={Privacy-preserving Neural Representations of Text},
  author={Coavoux, Maximin and Narayan, Shashi and Cohen, Shay B},
  booktitle={Proceedings of the 2018 Conference on Empirical Methods in Natural Language Processing},
  pages={1--10},
  year={2018}
}

@article{sousa2023keep,
  title={How to keep text private? A systematic review of deep learning methods for privacy-preserving natural language processing},
  author={Sousa, Samuel and Kern, Roman},
  journal={Artificial Intelligence Review},
  volume={56},
  number={2},
  pages={1427--1492},
  year={2023},
  publisher={Springer}
}

@inproceedings{brown2022does,
  title={What does it mean for a language model to preserve privacy?},
  author={Brown, Hannah and Lee, Katherine and Mireshghallah, Fatemehsadat and Shokri, Reza and Tram{\`e}r, Florian},
  booktitle={Proceedings of the 2022 ACM Conference on Fairness, Accountability, and Transparency},
  pages={2280--2292},
  year={2022}
}

@inproceedings{dwork2006calibrating,
  title={Calibrating noise to sensitivity in private data analysis},
  author={Dwork, Cynthia and McSherry, Frank and Nissim, Kobbi and Smith, Adam},
  booktitle={Theory of Cryptography: Third Theory of Cryptography Conference, TCC 2006, New York, NY, USA, March 4-7, 2006. Proceedings 3},
  pages={265--284},
  year={2006},
  organization={Springer}
}

@inproceedings{du2023sanitizing,
  title={Sanitizing sentence embeddings (and labels) for local differential privacy},
  author={Du, Minxin and Yue, Xiang and Chow, Sherman SM and Sun, Huan},
  booktitle={Proceedings of the ACM Web Conference 2023},
  pages={2349--2359},
  year={2023}
}

@inproceedings{feyisetan2020privacy,
  title={Privacy-and utility-preserving textual analysis via calibrated multivariate perturbations},
  author={Feyisetan, Oluwaseyi and Balle, Borja and Drake, Thomas and Diethe, Tom},
  booktitle={Proceedings of the 13th international conference on web search and data mining},
  pages={178--186},
  year={2020}
}

@inproceedings{lyu2020towards,
  title={Towards differentially private text representations},
  author={Lyu, Lingjuan and Li, Yitong and He, Xuanli and Xiao, Tong},
  booktitle={Proceedings of the 43rd International ACM SIGIR Conference on Research and Development in Information Retrieval},
  pages={1813--1816},
  year={2020}
}

@inproceedings{song2019auditing,
  title={Auditing data provenance in text-generation models},
  author={Song, Congzheng and Shmatikov, Vitaly},
  booktitle={Proceedings of the 25th ACM SIGKDD International Conference on Knowledge Discovery \& Data Mining},
  pages={196--206},
  year={2019}
}

@inproceedings{privacy_risks,
  title={Privacy risks of general-purpose language models},
  author={Pan, Xudong and Zhang, Mi and Ji, Shouling and Yang, Min},
  booktitle={2020 IEEE Symposium on Security and Privacy (SP)},
  pages={1314--1331},
  year={2020},
  organization={IEEE}
}

@inproceedings{GTR,
  title={Large Dual Encoders Are Generalizable Retrievers},
  author={Ni, Jianmo and Qu, Chen and Lu, Jing and Dai, Zhuyun and Abrego, Gustavo Hernandez and Ma, Ji and Zhao, Vincent and Luan, Yi and Hall, Keith and Chang, Ming-Wei and others},
  booktitle={Proceedings of the 2022 Conference on Empirical Methods in Natural Language Processing},
  pages={9844--9855},
  year={2022}
}

@inproceedings{sentenceBERT,
  title={Sentence-BERT: Sentence Embeddings using Siamese BERT-Networks},
  author={Reimers, Nils and Gurevych, Iryna},
  booktitle={Proceedings of the 2019 Conference on Empirical Methods in Natural Language Processing and the 9th International Joint Conference on Natural Language Processing (EMNLP-IJCNLP)},
  pages={3982--3992},
  year={2019}
}

@inproceedings{information_leak,
  title={Information leakage in embedding models},
  author={Song, Congzheng and Raghunathan, Ananth},
  booktitle={Proceedings of the 2020 ACM SIGSAC Conference on Computer and Communications Security},
  pages={377--390},
  year={2020}
}

@inproceedings{morris2023text,
  title={Text Embeddings Reveal (Almost) As Much As Text},
  author={Morris, John and Kuleshov, Volodymyr and Shmatikov, Vitaly and Rush, Alexander M},
  booktitle={Proceedings of the 2023 Conference on Empirical Methods in Natural Language Processing},
  pages={12448--12460},
  year={2023}
}

@inproceedings{GEIA,
    title = "Sentence Embedding Leaks More Information than You Expect: Generative Embedding Inversion Attack to Recover the Whole Sentence",
    author = "Li, Haoran  and
      Xu, Mingshi  and
      Song, Yangqiu",
    booktitle = "Findings of the Association for Computational Linguistics: ACL 2023",
    year = "2023",
    pages = "14022--14040",
}

@inproceedings{sentence-T5,
  title={Sentence-T5: Scalable Sentence Encoders from Pre-trained Text-to-Text Models},
  author={Ni, Jianmo and Abrego, Gustavo Hernandez and Constant, Noah and Ma, Ji and Hall, Keith and Cer, Daniel and Yang, Yinfei},
  booktitle={Findings of the Association for Computational Linguistics: ACL 2022},
  pages={1864--1874},
  year={2022}
}

@article{RAG,
  title={Retrieval-augmented generation for knowledge-intensive nlp tasks},
  author={Lewis, Patrick and Perez, Ethan and Piktus, Aleksandra and Petroni, Fabio and Karpukhin, Vladimir and Goyal, Naman and K{\"u}ttler, Heinrich and Lewis, Mike and Yih, Wen-tau and Rockt{\"a}schel, Tim and others},
  journal={Advances in Neural Information Processing Systems},
  volume={33},
  pages={9459--9474},
  year={2020}
}

@article{Fassis,
  title={Billion-scale similarity search with {GPUs}},
  author={Johnson, Jeff and Douze, Matthijs and J{\'e}gou, Herv{\'e}},
  journal={IEEE Transactions on Big Data},
  volume={7},
  number={3},
  pages={535--547},
  year={2019},
  publisher={IEEE}
}

@article{mimic,
  title={The MIMIC Code Repository: enabling reproducibility in critical care research},
  author={Johnson, Alistair E W and Stone, David J and Celi, Leo A and Pollard, Tom J},
  journal={Journal of the American Medical Informatics Association},
  volume={25},
  number={1},
  pages={32--39},
  year={2018},
  publisher={Oxford University Press}
}

@article{
biomedical-ner,
  title={Large-scale application of named entity recognition to biomedicine and epidemiology},
  author={Raza, Shaina and Reji, Deepak John and Shajan, Femi and Bashir, Syed Raza},
  journal={PLOS Digital Health},
  volume={1},
  number={12},
  pages={e0000152},
  year={2022},
  publisher={Public Library of Science San Francisco, CA USA}
}

@article{muennighoff2022mteb,
  doi = {10.48550/ARXIV.2210.07316},
  url = {https://arxiv.org/abs/2210.07316},
  author = {Muennighoff, Niklas and Tazi, Nouamane and Magne, Lo{\"\i}c and Reimers, Nils},
  title = {MTEB: Massive Text Embedding Benchmark},
  publisher = {arXiv},
  journal={arXiv preprint arXiv:2210.07316},  
  year = {2022}
}

@inproceedings{agirre-etal-2012-semeval,
    title = "{S}em{E}val-2012 Task 6: A Pilot on Semantic Textual Similarity",
    author = "Agirre, Eneko  and
      Cer, Daniel  and
      Diab, Mona  and
      Gonzalez-Agirre, Aitor",
    editor = "Agirre, Eneko  and
      Bos, Johan  and
      Diab, Mona  and
      Manandhar, Suresh  and
      Marton, Yuval  and
      Yuret, Deniz",
    booktitle = "*{SEM} 2012: The First Joint Conference on Lexical and Computational Semantics {--} Volume 1: Proceedings of the main conference and the shared task, and Volume 2: Proceedings of the Sixth International Workshop on Semantic Evaluation ({S}em{E}val 2012)",
    month = "7-8 " # jun,
    year = "2012",
    address = "Montr{\'e}al, Canada",
    publisher = "Association for Computational Linguistics",
    url = "https://aclanthology.org/S12-1051",
    pages = "385--393",
}

@inproceedings{agirre-etal-2014-semeval,
    title = "{S}em{E}val-2014 Task 10: Multilingual Semantic Textual Similarity",
    author = "Agirre, Eneko  and
      Banea, Carmen  and
      Cardie, Claire  and
      Cer, Daniel  and
      Diab, Mona  and
      Gonzalez-Agirre, Aitor  and
      Guo, Weiwei  and
      Mihalcea, Rada  and
      Rigau, German  and
      Wiebe, Janyce",
    editor = "Nakov, Preslav  and
      Zesch, Torsten",
    booktitle = "Proceedings of the 8th International Workshop on Semantic Evaluation ({S}em{E}val 2014)",
    month = aug,
    year = "2014",
    address = "Dublin, Ireland",
    publisher = "Association for Computational Linguistics",
    url = "https://aclanthology.org/S14-2010",
    doi = "10.3115/v1/S14-2010",
    pages = "81--91",
}

@inproceedings{cer-etal-2017-semeval,
    title = "{S}em{E}val-2017 Task 1: Semantic Textual Similarity Multilingual and Crosslingual Focused Evaluation",
    author = "Cer, Daniel  and
      Diab, Mona  and
      Agirre, Eneko  and
      Lopez-Gazpio, I{\~n}igo  and
      Specia, Lucia",
    editor = "Bethard, Steven  and
      Carpuat, Marine  and
      Apidianaki, Marianna  and
      Mohammad, Saif M.  and
      Cer, Daniel  and
      Jurgens, David",
    booktitle = "Proceedings of the 11th International Workshop on Semantic Evaluation ({S}em{E}val-2017)",
    month = aug,
    year = "2017",
    address = "Vancouver, Canada",
    publisher = "Association for Computational Linguistics",
    url = "https://aclanthology.org/S17-2001",
    doi = "10.18653/v1/S17-2001",
    pages = "1--14",
}

@inproceedings{kim2022toward,
  title={Toward privacy-preserving text embedding similarity with homomorphic encryption},
  author={Kim, Donggyu and Lee, Garam and Oh, Sungwoo},
  booktitle={Proceedings of the Fourth Workshop on Financial Technology and Natural Language Processing (FinNLP)},
  pages={25--36},
  year={2022}
}

@inproceedings{maia201818,
  title={Www'18 open challenge: financial opinion mining and question answering},
  author={Maia, Macedo and Handschuh, Siegfried and Freitas, Andr{\'e} and Davis, Brian and McDermott, Ross and Zarrouk, Manel and Balahur, Alexandra},
  booktitle={Companion proceedings of the the web conference 2018},
  pages={1941--1942},
  year={2018}
}

@inproceedings{boteva2016full,
  title={A full-text learning to rank dataset for medical information retrieval},
  author={Boteva, Vera and Gholipour, Demian and Sokolov, Artem and Riezler, Stefan},
  booktitle={Advances in Information Retrieval: 38th European Conference on IR Research, ECIR 2016, Padua, Italy, March 20--23, 2016. Proceedings 38},
  pages={716--722},
  year={2016},
  organization={Springer}
}

@inproceedings{bondarenko2020overview,
  title={Overview of Touch{\'e} 2020: argument retrieval},
  author={Bondarenko, Alexander and Fr{\"o}be, Maik and Beloucif, Meriem and Gienapp, Lukas and Ajjour, Yamen and Panchenko, Alexander and Biemann, Chris and Stein, Benno and Wachsmuth, Henning and Potthast, Martin and others},
  booktitle={Experimental IR Meets Multilinguality, Multimodality, and Interaction: 11th International Conference of the CLEF Association, CLEF 2020, Thessaloniki, Greece, September 22--25, 2020, Proceedings 11},
  pages={384--395},
  year={2020},
  organization={Springer}
}
\bibliographystyle{iclr2026_conference}

\appendix
\newpage
\appendix


\section{Empirical Validation of Privacy-Sensitive Dimensions}
\label{sec:motivation}
In this section, we introduce the concept of \textit{privacy neurons} and empirically validate their existence and relevance. We demonstrate that privacy-related information within text embeddings may be primarily concentrated in a limited subset of dimensions.

\begin{definition}[\textbf{Privacy Neurons}]
Consider an input text $\mathbf{s}$ and an embedding model $\Phi: \mathbf{s} \rightarrow \mathbb{R}^d$. We assume there is a subset of dimensions $\mathcal{N}_t \subseteq \mathcal{V}= \{1,\ldots,d\}$ that encapsulates the sensitive information associated with a privacy concept $\mathcal{C}$. Consequently, the embedding $\Phi(x)$ can be expressed as:
\begin{equation}
\label{eq:privacy-neuron}
\Phi(\mathbf{s}) = (\Phi_{\mathcal{N}_\mathcal{C}}(\mathbf{s}), \Phi_{\mathcal{V} \setminus \mathcal{N}_\mathcal{C}}(\mathbf{s})),    
\end{equation}
where $\Phi_{\mathcal{N}_\mathcal{C}}(\mathbf{s})$ represents the privacy-sensitive neuron activations and $ \Phi_{\mathcal{V} \setminus \mathcal{N}_\mathcal{C}}(\mathbf{s})$ the privacy-invariant neuron activations. 
\end{definition}
Intuitively, dimensions identified as privacy neurons should exhibit higher \textbf{\emph{sensitivity}} to the presence or absence of privacy-related tokens in the input text. To quantify how individual embedding dimensions respond to privacy-related information, we introduce the following measure: 
\begin{definition}[\textbf{Neuron Sensitivity}]  Let $D^+$ and $D^-$ denote positive and negative datasets containing sentences with and without tokens related to the privacy concept $\mathcal{C}$, respectively. For each embedding dimension $i$, the neuron sensitivity $\Delta_i$ is defined as:
    \begin{equation}
    \Delta_i = \max\left(\{|\Phi(\mathbf{s}^+)_i - \Phi(\mathbf{s}^-)_i| \mid \mathbf{s}^+ \in D^+, \mathbf{s}^- \in D^-\}\right),
\end{equation}
where \( \Phi(\cdot)_i \) represents the activation value of the \( i \)-th embedding dimension.
\end{definition}

We assume a high value of $\Delta_i$ indicates that dimension $i$ is responsive and likely encodes privacy-related information.

\paragraph{Dataset Construction for Sensitivity Analysis}
To measure the embedding changes associated with the privacy concept \( \mathcal{C} \), we first construct a dataset \( D^+ = \{\mathbf{s}_1, \ldots, \mathbf{s}_{|D^+|}\} \), containing sentences that include tokens from concept \( \mathcal{C} \). Correspondingly, we generate a negative set \( D^- = \{\mathcal{R}(\mathbf{s}_i, \mathcal{C}) \mid \mathbf{s}_i \in D^+\} \), where \( \mathcal{R}(\mathbf{s}_i, \mathcal{C}) \) denotes the operation of removing all tokens \( t_i \in \mathcal{C} \) from the sentence \( \mathbf{s}_i \). Thus, \( D^- \) consists of sentences identical to \( D^+ \) except for the absence of tokens associated with the sensitive privacy concept.

\paragraph{Results}
Figure~\ref{fig:privacy-dimension} presents the distribution of sensitivity scores for dimensions identified as the top and bottom 10\% privacy neurons based on the sensitivity vector \( \mathbf{v} \). Our pilot study clearly illustrates a significant difference between the two groups. Specifically, the top-ranked privacy neurons demonstrate substantially higher sensitivity scores (mean sensitivity = 0.04) than the bottom-ranked neurons, which exhibit nearly zero sensitivity. A Wilcoxon Signed Rank Test confirms the significance of this observation with a p-value of \( 1.30 \times 10^{-21} \). These results empirically support the existence of privacy neurons, suggesting that embedding inversion attacks may be effectively mitigated by selectively manipulating only a small subset of embedding dimensions.

\begin{figure}
    \centering
    \includegraphics[width=0.2\linewidth]{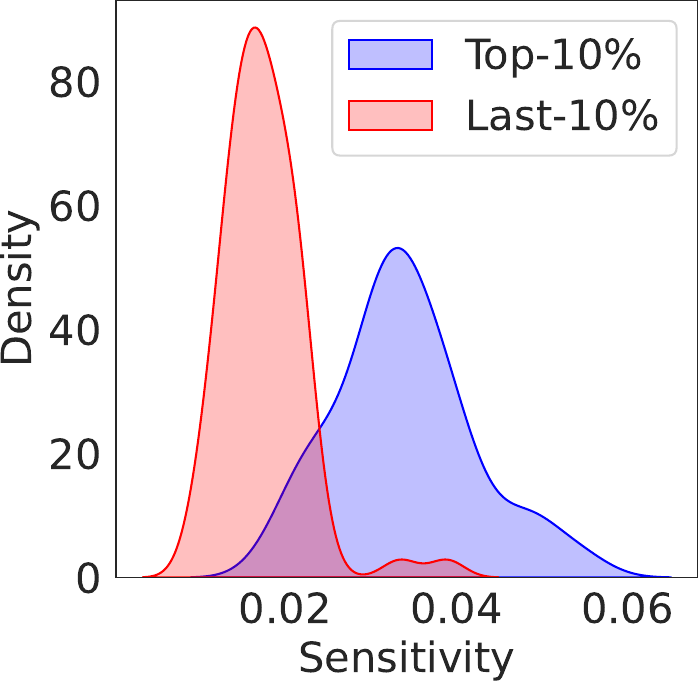}
    \caption{Sensitivity distribution comparison between the top and bottom 10\% privacy neurons. The Wilcoxon Signed Rank Test indicates a highly significant difference (p-value \(= 1.30 \times 10^{-21}\)).}
    \label{fig:privacy-dimension}
\end{figure}

\section{Missing Proof in Section~\ref{method:perturbation}}
\label{appendix:proofs}




\subsection{Proof of Theorem~\ref{thm:maha-privacy}}
\label{appendix:proof-maha}
\begin{proof}[Proof of Theorem~\ref{thm:maha-privacy}]
Recall that the mechanism releases $\Phi'(\mathbf{s}) = \Phi(\mathbf{s}) + Z$, where the noise density is $f_{Z}(z) = C\exp(-\varepsilon\|z\|_{M})$ and the normalizing constant $C$ is independent of $z$.

For any output $y \in \mathbb{R}^{d}$, we have:
\begin{align}
\frac{\Pr[\Phi'(\mathbf{s}) = y]}{\Pr[\Phi'(\mathbf{s}') = y]} &= \frac{f_{Z}(y - \Phi(\mathbf{s}))}{f_{Z}(y - \Phi(\mathbf{s}'))} \\
&= \frac{C\exp(-\epsilon\|y - \Phi(\mathbf{s})\|_{M})}{C\exp(-\epsilon\|y - \Phi(\mathbf{s}')\|_{M})} \\
&= \exp\Big(-\epsilon\|y - \Phi(\mathbf{s})\|_{M} + \epsilon\|y - \Phi(\mathbf{s}')\|_{M}\Big)
\end{align}

By the triangle inequality for the Mahalanobis norm, we have:
\begin{align}
\|y - \Phi(\mathbf{s})\|_{M} - \|y - \Phi(\mathbf{s}')\|_{M} \leq \|\Phi(\mathbf{s}) - \Phi(\mathbf{s}')\|_{M}
\end{align}

Therefore:
\begin{align}
\frac{\Pr[\Phi'(\mathbf{s}) = y]}{\Pr[\Phi'(\mathbf{s}') = y]} &\leq \exp\Big(\epsilon\|\Phi(\mathbf{s}) - \Phi(\mathbf{s}')\|_{M}\Big)
\end{align}

This precisely establishes $\epsilon d$-local differential privacy under the Mahalanobis norm.
\end{proof}
\subsection{Proof of Lemma~\ref{lem:norm-equivalence}}
\begin{proof}[Proof of Lemma~\ref{lem:norm-equivalence}]
Because $\Sigma$ is symmetric positive–definite, it admits the spectral
decomposition
\(
        \Sigma = Q\Lambda Q^{\!\top},
\)
where $Q$ is orthogonal
($Q^{\!\top}Q=I$) and $\Lambda=\operatorname{diag}(\xi_{1},\dots,\xi_{n})$ collects the eigenvalues
$\xi_{1},\dots,\xi_{n}$ of $\Sigma$.
Write $\tilde v := Q^{\!\top}v$; note that $\|\tilde v\|_{2}=\|v\|_{2}$
because $Q$ is orthogonal.

\paragraph{Upper bound.}
By assumption $\xi_i\ge c$ for every $i$, hence the eigenvalues of
$\Sigma^{-1}$ satisfy
\(
        \xi_i^{-1}\le c^{-1}.
\)
Therefore
\[
        \|v\|_{M}^{2}
        \;=\;
        v^{\!\top}\Sigma^{-1}v
        \;=\;
        \tilde v^{\!\top}\Lambda^{-1}\tilde v
        \;=\;
        \sum_{i=1}^{n}\frac{\tilde v_i^{2}}{\xi_i}
        \;\le\;
        \frac{1}{c}\sum_{i=1}^{n}\tilde v_i^{2}
        \;=\;
        \frac{\|v\|_{2}^{2}}{c},
\]
which yields
\(
        \|v\|_{M}\le\|v\|_{2}/\sqrt{c}.
\)

\paragraph{Lower bound.}
Because $\operatorname{trace}(\Sigma)=n$,
\(
        \sum_{i=1}^{n}\xi_i = n,
\)
implying $\xi_i\le n$ for every $i$.
Consequently
\(
        \xi_i^{-1}\ge 1/n
\)
and
\[
        \|v\|_{M}^{2}
        \;=\;
        \sum_{i=1}^{n}\frac{\tilde v_i^{2}}{\xi_i}
        \;\ge\;
        \frac{1}{n}\sum_{i=1}^{n}\tilde v_i^{2}
        \;=\;
        \frac{\|v\|_{2}^{2}}{n},
\]
so that
\(
        \|v\|_{M}\ge\|v\|_{2}/\sqrt{n}.
\)

\medskip
Combining the two inequalities completes the proof.
\end{proof}
\subsection{Proof of Lemma~\ref{lem:exp-sandwich}}
\begin{proof}[Proof of Lemma~\ref{lem:exp-sandwich}]
Let
\(
        v:=\Phi(x)-\Phi(x')\in\mathbb R^{m}.
\)
By Lemma~\ref{lem:norm-equivalence} we have the deterministic bounds
\[
   \frac{\|v\|_{2}}{\sqrt{m}}
   \;\le\;
   \|v\|_{M}
   \;\le\;
   \frac{\|v\|_{2}}{\sqrt{c}}.
\]
Multiplying each term by the non–negative scalar $\epsilon$ preserves the
ordering, and applying the (strictly increasing) exponential map yields
\[
   \exp\left(\frac{\epsilon}{\sqrt{m}}\|v\|_{2}\right)
   \;\le\;
   \exp\left(\epsilon\|v\|_{M}\right)
   \;\le\;
   \exp \left(\frac{\epsilon}{\sqrt{c}}\|v\|_{2}\right),
\]
which is precisely the desired statement.
\end{proof}
\section{Algorithm for Mahalanobis Noise Sampling}
\label{appendix:proof-noise}

\begin{algorithm}
\caption{Sampling from $f_Z(z) \propto \exp(-\epsilon \|z\|_{\text{M}})$}
\begin{algorithmic}[1]
\State \textbf{Input:} Privacy budget $\epsilon$, dimension $n$, a positive definite matrix $\Sigma$
\State Sample an $n$-dimensional random vector $N$ from a multivariate normal distribution with mean zero and identity covariance matrix.
\State Normalize $X = N / \|N\|_2$
\State Sample $Y$ from a Gamma distribution with shape parameter $n$ and scale parameter $1/\epsilon$
\State Return $Z = Y \cdot \Sigma^{1/2} X$
\end{algorithmic}
\label{algor:noise-sampling}
\end{algorithm}

\begin{lemma}
The random variable \(Z\) returned from Algorithm 1 has a probability‐density function of the form
\[
f_Z(z)\;\propto\;\exp\bigl(-\varepsilon\,\|z\|_{M}\bigr),
\quad
\|z\|_{M} \;=\;\sqrt{z^{\top}\Sigma^{-1}z}\;.
\]
\end{lemma}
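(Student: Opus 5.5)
The plan is to compute the density of $Z$ in two stages: first the density of $W := Y X$, then push it forward through the linear map $w \mapsto \Sigma^{1/2} w$.

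\textbf{Stage 1: the density of $W = YX$.} Since $N$ is standard Gaussian, $X = N/\|N\|_2$ is uniform on the unit sphere $S^{n-1}$. It is independent of $Y$ because $Y$ is drawn separately.

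I would compute the density of $W$ in polar coordinates $w = r\theta$, with $r>0$ and $\theta \in S^{n-1}$. The law of $(r,\theta)$ has density proportional to $r^{n-1}e^{-\epsilon r}$ times the uniform surface measure on the sphere; this is the Gamma$(n, 1/\epsilon)$ density of $Y$ times the uniform law of $X$. Lebesgue measure in polar coordinates is $dw = r^{n-1}\,dr\,d\sigma(\theta)$. Dividing by the Jacobian $r^{n-1}$ therefore gives
\[
f_W(w) \propto e^{-\epsilon\|w\|_2}.
\]
This is exactly why the shape parameter is $n$: the radial factor $r^{n-1}$ in the Gamma density cancels the polar Jacobian. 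It also reproduces the standard sampler for the generalized Laplace mechanism.

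\textbf{Stage 2: the linear change of variables.} Set $Z = \Sigma^{1/2} W$, where $\Sigma^{1/2}$ is the symmetric positive definite square root. This map is invertible, so the change-of-variables formula gives
\[
f_Z(z) = f_W(\Sigma^{-1/2}z)\,\lvert\det \Sigma^{-1/2}\rvert.
\]
The determinant factor is a constant independent of $z$. The norm inside the exponent becomes
\[
\|\Sigma^{-1/2} z\|_2 = \sqrt{z^\top \Sigma^{-1/2}\Sigma^{-1/2} z} = \sqrt{z^\top\Sigma^{-1}z} = \|z\|_M.
\]
This uses only the symmetry of $\Sigma^{-1/2}$. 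Hence $f_Z(z)\propto \exp(-\epsilon\|z\|_M)$, which is the claimed density.

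\textbf{Main obstacle.} The only delicate step is the polar-coordinate computation in Stage 1. One must check that independence of $X$ and $Y$, together with the uniform law on the sphere, gives a joint density on $\mathbb{R}^n\setminus\{0\}$ of the stated product form; the origin is a null set and can be ignored. I would justify this by integrating an arbitrary test function $g$:
\[
\mathbb{E}\,g(YX) = c\int_0^\infty\!\!\int_{S^{n-1}} g(r\theta)\, r^{n-1}e^{-\epsilon r}\,d\sigma(\theta)\,dr = c\int_{\mathbb{R}^n} g(w)\,e^{-\epsilon\|w\|_2}\,dw.
\]
Both the rotational invariance of the Gaussian (which makes $X$ uniform) and the polar integration formula are standard facts.

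Finally, I would note that this density is normalizable, since $\int e^{-\epsilon\|w\|_2}\,dw<\infty$. So the proportionality constant is finite and positive.
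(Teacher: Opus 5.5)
Your proof is correct and follows the same two-stage route as the paper: first show that $YX$ has density proportional to $e^{-\epsilon\|w\|_2}$, because the radial factor $r^{n-1}$ of the Gamma$(n,1/\epsilon)$ density cancels the surface-area Jacobian; then push forward through the invertible linear map $\Sigma^{1/2}$, absorbing $|\det\Sigma^{-1/2}|$ into the normalizing constant. The difference is only in presentation. The paper phrases the first stage through a conditional density on the sphere of radius $y$ and a Dirac delta, while your polar-coordinates computation with a test function is a more rigorous version of the same calculation.
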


\begin{proof}
Define \(U = YX\).  Note that conditional on \(Y=y\), \(U\) is uniformly distributed on the sphere of radius \(y\) in \(\mathbb{R}^m\).  Hence
\[
f_{U\mid Y}(u\mid y)
\;\propto\;
y^{-(m-1)}
\quad\text{whenever }\|u\|_2 = y,
\]
and zero otherwise.  Using the Dirac delta function \(\delta(\cdot)\), we write
\[
\begin{aligned}
f_U(u)
&=
\int_{0}^{\infty}
f_{U\mid Y}(u\mid y)\,
f_Y(y)\,
\delta\left(y - \|u\|_2\right)\,dy
\\[6pt]
&\propto
\int_{0}^{\infty}
y^{-(n-1)}
\;\frac{\epsilon^n}{\Gamma(n)}\,y^{\,n-1}e^{-\epsilon y}
\;\delta\left(y - \|u\|_2\right)\,dy
\\[6pt]
&\propto
e^{-\epsilon\|u\|_2},
\end{aligned}
\]
so \(f_U(u)\propto\exp(-\epsilon\|u\|_2)\).

Since \(\Sigma\) is positive definite, \(\Sigma^{1/2}\) exists and is invertible.  Setting $Z = \Sigma^{1/2}\,U$, the change‐of‐variables formula yields
\[
\begin{aligned}
f_Z(z)
&= f_U\left(\Sigma^{-1/2}z\bigr)\,\bigl|\det(\Sigma^{-1/2} \right)|
\\[4pt]
&\propto
\exp \left(-\epsilon\,\|\Sigma^{-1/2}z\|_2 \right)
=
\exp \left(-\epsilon\,\sqrt{z^{\top}\Sigma^{-1}z}\right)
=
\exp\bigl(-\epsilon\,\|z\|_M\bigr).
\end{aligned}
\]
This completes the proof.
\end{proof}

\begin{table*}[ht!]
\centering
\caption{Statistics of datasets.}
\resizebox{\linewidth}{!}{
\begin{tabular}{lcccccccc}
\toprule
Dataset & STS12 & FIQA & STSB & STS14 & Quora & NFCorpus & MIMIC-III & PII-300K \\
\midrule
Downstream task & STS & Retrieval & STS & STS & Retrieval & Retrieval & - & - \\
Domain & SemEval & Financial & SemEval & SemEval & QA & Medical & Medical & PII \\
Sentences & 10684 & 5500 & 17256 & 3000 & 10000 & 2590 & 4244 & 177677 \\
Average sentence length & 14.53 & 10.80 & 10.17 & 9.77 & 9.53 & 3.31 & 15.03 & 47.12 \\ 
Unique named entities & 123 & 41 & 228 & 41 & 90 & 13 & 290 & 491 \\
Evaluation metric & Pearson Corr. & NDCG@10 & Pearson Corr. & Pearson Corr. & NDCG@10 & NDCG@10 & - & - \\
\bottomrule
\end{tabular}
}
\label{tab:dataset_app}
\end{table*}
\section{Dataset Statistics and Evaluation Metrics}
\label{app:eval-metrics}
\noindent\textbf{Privacy Metrics.} 
To quantify the privacy risk of our model, we adopt two complementary metrics: \emph{Leakage} and \emph{Confidence}. These metrics assess both the accuracy and certainty of an adversarial model attempting to infer sensitive information from the model’s outputs.

\smallskip  
\noindent\textbf{(1) Leakage.}  
Leakage measures the extent to which an attack model $\mathcal{A}$ can recover sensitive tokens from an obfuscated embedding. Given a sentence $\mathbf{s}_i$ containing sensitive tokens $C_i \subseteq \mathcal{C}$, the attacker generates a reconstructed sentence $\hat{\mathbf{s}}_i = \mathcal{A}(\Phi'(\mathbf{s}_i))$ based on the obfuscated embedding. The leakage is computed by checking whether any sensitive token appears in the reconstructed sentence:

\begin{equation}
\text{Leakage} = \frac{1}{T} \sum_{i=1}^{N} \sum_{t \in C_i} \mathbf{1}\left[t \in \hat{\mathbf{s}}_i \right]
\end{equation}

\noindent where $N$ is the number of text samples, $C_i$ is the set of sensitive tokens in sentence $\mathbf{s}_i$, $\hat{\mathbf{s}}_i$ is the reconstructed sentence from the attacker, and $T = \sum_{i=1}^{N} |C_i|$ is the total number of sensitive token instances across the dataset. A lower Leakage score indicates better protection of sensitive content, as fewer sensitive tokens are successfully inferred by the attacker.

\smallskip  
\noindent\textbf{(2) Confidence.}  
Confidence quantifies how certain the attack model is when predicting sensitive tokens, regardless of whether the predictions are correct. It is defined as the average predicted probability assigned to the true sensitive tokens across all samples:

\begin{equation}
\text{Confidence} = \frac{1}{T} \sum_{i=1}^{N} \sum_{t \in C_i} P_{\mathcal{A}}(t \mid \Phi'(\mathbf{s}_i))
\end{equation}

\noindent where $C_i \subseteq \mathcal{C}$ is the set of sensitive tokens in sentence $\mathbf{s}_i$, $\Phi'(\mathbf{s}_i)$ is the obfuscated embedding, and $T = \sum_{i=1}^{N} |C_i|$ is the total number of sensitive token instances. The term $P_{\mathcal{A}}(t \mid \Phi'(\mathbf{s}_i))$ denotes the probability assigned by the attack model $\mathcal{A}$ to sensitive token $t$ based on the obfuscated embedding. A lower Confidence score indicates that the model is less certain in its inference, suggesting stronger privacy.

\textbf{Utility Metrics.}   To assess the utility of the learned representations, we follow the widely adopted evaluation framework provided by the Massive Text Embedding Benchmark (MTEB)~\cite{muennighoff2022mteb}. MTEB is a standard benchmark for embedding models, covering a diverse set of downstream tasks such as classification, clustering, retrieval, and semantic textual similarity. These tasks reflect the practical performance of embeddings in real-world applications. By using MTEB, we ensure that our utility evaluation is comprehensive, comparable, and aligned with established practices in the embedding research community.

\section{Sensitive Token Extraction}
\label{sec:sensitive-token}
We utilize the MIMIC-III clinical notes corpus~\cite{mimic}, a de-identified electronic health record dataset comprising detailed clinical documentation from intensive care units. To extract privacy-sensitive information, we apply a biomedical Named Entity Recognition (NER) model~\cite{biomedical-ner} specifically trained to identify medically relevant entities such as age, sex, diseases, and symptoms. For non-clinical datasets, named entities are extracted using the \texttt{en\_core\_web\_sm} NER pipeline from the spaCy library\footnote{\url{https://github.com/explosion/spacy-models/releases/tag/en_core_web_sm-3.7.0}}, which provides general-purpose entity recognition for categories such as persons, locations, and organizations.

\section{Additional Experimental Results}
\subsection{Performance on More Datasets}

\label{appendix:more-main-exp}
In addition to the STS12~\cite{agirre-etal-2012-semeval} and FIQA~\cite{maia201818} datasets used in the main experiment, Table~\ref{tab:dataset_app} also presents statistics of other datasets, including STSB~\cite{cer-etal-2017-semeval}, STS14~\cite{agirre-etal-2014-semeval}, Quora~\cite{bondarenko2020overview}, and NFCorpus~\cite{boteva2016full}.
Table~\ref{tab:results-app} shows the complete defense performance on all datasets. Besides using Leakage, we also utilize Confidence to assess the defense performance. This metric reflects the certainty of the attack model's predictions. A higher Confidence score indicates that the model is more confident in its prediction of the sensitive token. For the semantic textual similarity (STS) task, downstream performance is measured using the Pearson correlation of Cosine Similarity (Pearson corr.). In the context of information retrieval, we employ the ranking metric NDCG@10. As described in Section~\ref{subsec:privacy-utility-tradeoff}, \model{} consistently demonstrates superior performance over LapMech and PurMech across all levels of perturbation and datasets, both in defense and downstream task metrics.

\begin{table*}[h]
\centering
\caption{Privacy-utility tradeoff across different defense Methods. Privacy leakage is assessed using Leakage and Confidence metrics, where lower values indicate stronger privacy protection. Utility is measured by data-specific downstream performance. All metrics are presented as percentages (\%).}
\resizebox{\linewidth}{!}{
\begin{tabular}{c|c|ccc|ccc|ccc}
\toprule
& & \multicolumn{6}{c|}{\textbf{Privacy Metrics}} & \multicolumn{3}{c}{\textbf{Utility Metric}} \\ \cmidrule(lr){3-8} \cmidrule(lr){9-11}
& & \multicolumn{3}{c|}{\textbf{Leakage $\downarrow$}} & \multicolumn{3}{c|}{\textbf{Confidence $\downarrow$}}  & \multicolumn{3}{c}{\textbf{Downstream $\uparrow$}} \\ \cmidrule(lr){3-5} \cmidrule(lr){6-8} \cmidrule(lr){9-11} 
\textbf{Dataset} & $\epsilon$ & \textbf{LapMech} & \textbf{PurMech} & \textbf{\model{}} & \textbf{LapMech} & \textbf{PurMech} & \textbf{\model{}} & \textbf{LapMech} & \textbf{PurMech} & \textbf{\model{}} \\
\midrule
\multirow{6}{*}{STSB}& 5 & 20.75 & 19.03 & \textbf{2.68} & 1.89 & 1.98 & \textbf{1.78} & 40.03 & 40.05 & \textbf{48.17} \\
& 10 & 53.79 & 49.95 & \textbf{32.39} & 15.40 & 13.97 & \textbf{8.07} & 71.87 & 71.85 & \textbf{76.14} \\
& 20 & 71.82 & 69.15 & \textbf{64.79} & 41.76 & 38.44 & \textbf{36.33} & 80.95 & 80.95 & \textbf{81.00} \\
& 30 & 76.15 & 74.98 & \textbf{73.10} & 52.36 & 49.28 & \textbf{48.63} & \textbf{81.08} & \textbf{81.08} & 80.91 \\
& 40 & 78.94 & 77.40 & \textbf{76.42} & 56.84 & 54.02 & \textbf{53.98} & \textbf{80.95} & \textbf{80.95} & 80.81 \\
\cmidrule{2-11}
& $\infty$ & \multicolumn{3}{c|}{86.75} & \multicolumn{3}{c|}{66.57}  & \multicolumn{3}{c}{80.64} \\ \midrule \midrule
\multirow{6}{*}{STS14}& 5 & 1.03 & 1.55 & \textbf{0.30} & 20.42 & 20.88 & \textbf{18.05} & 39.76 & 39.71 & \textbf{48.47} \\
& 10 & 4.04 & 4.13 & \textbf{2.41} & 21.86 & 21.84 & \textbf{21.10} & 70.28 & 70.25 & \textbf{74.44} \\
& 20 & \textbf{8.64} & 8.77 & 9.46 & 28.05 & \textbf{27.73} & 28.56 & 79.16 & 79.16 & \textbf{79.31} \\
& 30 & \textbf{11.22} & 11.26 & 14.70 & \textbf{30.38} & 30.39 & 32.95 & \textbf{79.47} & \textbf{79.47} & 79.37 \\
& 40 & 13.67 & \textbf{13.50} & 16.12 & 32.09 & \textbf{32.05} & 34.81 & \textbf{79.43} & \textbf{79.43} & 79.32 \\ \cmidrule{2-11}
& $\infty$ & \multicolumn{3}{c|}{21.97} & \multicolumn{3}{c|}{35.99}  & \multicolumn{3}{c}{79.25} \\ \midrule \midrule
\multirow{6}{*}{Quora}& 5 & 25.96 & 25.87 & \textbf{2.85} & 2.71 & 2.70 & \textbf{1.57} & 11.89 & 11.78 & \textbf{15.43} \\
& 10 & 57.44 & 54.78 & \textbf{33.67} & 18.62 & 15.92 & \textbf{9.94} & 70.04 & 70.19 & \textbf{82.19} \\
& 20 & 75.56 & 75.80 & \textbf{68.00} & 50.87 & 51.00 & \textbf{41.21} & 82.79 & 82.75 & \textbf{83.94} \\
& 30 & 81.65 & 81.65 & \textbf{76.75} & 58.99 & 59.08 & \textbf{53.43} & 83.70 & 83.72 & \textbf{84.02} \\
& 40 & 83.69 & 83.64 & \textbf{79.79} & 62.28 & 62.06 & \textbf{57.32} & 83.90 & 83.91 & \textbf{83.97} \\ \cmidrule{2-11}
& $\infty$ & \multicolumn{3}{c|}{89.30} & \multicolumn{3}{c|}{68.30}  & \multicolumn{3}{c}{84.01} \\ \midrule \midrule
\multirow{6}{*}{NFCorpus}& 5 & 7.77 & 8.45 & \textbf{0.68} & 1.27 & 1.06 & \textbf{0.83} & \textbf{23.70} & 23.61 & 19.94 \\
& 10 & 29.73 & 31.42 & \textbf{12.16} & 15.92 & 15.51 & \textbf{6.73} & 27.31 & 27.38 & \textbf{29.61} \\
& 20 & 56.76 & 55.41 & \textbf{46.96} & 45.70 & 46.26 & \textbf{35.36} & 30.76 & 30.75 & \textbf{31.04} \\
& 30 & 69.93 & 69.26 & \textbf{57.77} & 58.27 & 58.00 & \textbf{48.09} & 31.32 & 31.32 & \textbf{31.37} \\
& 40 & 78.72 & 79.05 & \textbf{66.55} & 63.89 & 63.83 & \textbf{53.85} & \textbf{31.56} & \textbf{31.56} & 31.52 \\ \cmidrule{2-11}
& $\infty$ & \multicolumn{3}{c|}{88.18} & \multicolumn{3}{c|}{75.54}  & \multicolumn{3}{c}{31.63} \\ 
\bottomrule
\end{tabular}
}
\label{tab:results-app}
\end{table*}

\subsection{Defense Performance on More Embedding Models}
\label{sec:different-embedding-model}
To assess the generalizability of \model{}, we evaluate its performance on three representative embedding models: GTR-base~\cite{GTR}, Sentence-T5~\cite{sentence-T5}, and SBERT~\cite{sentenceBERT}. As presented in Table~\ref{tab:different-embedding-models}, \model{} consistently achieves low privacy leakage (e.g., 19\% with GTR-base and 17\% with SBERT), while preserving strong downstream utility. In contrast, baseline methods such as LapMech and PurMech not only suffer from higher leakage rates (20–30\%) but also incur greater utility degradation. These results support the generality of our approach and validate the effectiveness of detecting and perturbing privacy-sensitive dimensions across different embedding architectures.
\vspace{-6pt}

\begin{table}[ht]
\centering
\caption{Defense and downstream performance using different embedding models under $\epsilon=10$. We use STS12 dataset and report the mean and standard deviation of 5 runs for all evaluation metrics.}
\vspace{5pt}
\label{tab:different-embedding-models}
\resizebox{\linewidth}{!}{
\begin{tabular}{ccccccc}
\toprule
  \textbf{Embedding Models} & \multicolumn{2}{c}{\textbf{GTR-base}} & \multicolumn{2}{c}{\textbf{Sentence-T5}} & \multicolumn{2}{c}{\textbf{SBERT}} \\\cmidrule(lr){1-1}  \cmidrule(lr){2-3} \cmidrule(lr){4-5} \cmidrule(lr){6-7} 
 \textbf{Metrics} & \textbf{Leakage $\downarrow$} & \textbf{Downstream $\uparrow$} & \textbf{Leakage $\downarrow$} & \textbf{Downstream $\uparrow$} & \textbf{Leakage $\downarrow$} & \textbf{Downstream $\uparrow$} \\ \midrule
\rowcolor{gray!20}
Non-protected & 60.09 & 74.25 & 43.83 & 86.79 & 42.11 & 81.36  \\ \midrule 
LapMech & 22.34 \std{0.62} & 60.72 \std{0.00} & 31.71 \std{0.62} & 63.16 \std{0.00} & 23.82 \std{0.89} & 66.89 \std{0.00} \\ \midrule
PurMech & 22.66 \std{0.67} & 60.72 \std{0.00} & 32.11 \std{0.47} & 63.15 \std{0.00} & 23.59 \std{0.78} & 65.89 \std{0.00} \\ \midrule
\model{} & \textbf{19.31} \std{0.21} & \textbf{65.27} \std{0.00} & \textbf{22.38} \std{0.44} & \textbf{74.45} \std{0.00} & \textbf{17.15} \std{0.74} & \textbf{69.42} \std{0.00} \\ \bottomrule
\end{tabular}
}
\end{table}

\subsection{Comparison with PII-based Defense Methods}
\label{sec:pii-transformation}
Since the goal of \model{} aims to mitigate the privacy leakage of sensitive tokens, it raises a natural question: how does \model{} compare to traditional PII removal or transformation methods?
To answer this question, we evaluate three additional PII-based defense approaches: (1) PII removal via Azure Language Service~\cite{microsoft_language_overview}, which replaces private tokens with '*', (2) Random word replacement from the corpus, and (3) Semantic word replacement within the same named entity category. The results are presented in Table~\ref{tab:pii-transformation}. We have the following key insights:

\textbf{PII transformation incurs significant information loss.}  
All PII-based strategies lead to noticeable degradation in downstream performance. For instance, PII redaction reduces STS12 accuracy from 74\% to 59\%, and FIQA from 33\% to 21\%. Semantic replacement fares slightly better, with scores of 64\% (STS12) and 18\% (FIQA), but still underperforms relative to the original embeddings. Random replacement exhibits a similar decline, indicating that simple token-level transformations often disrupt semantic integrity.

\textbf{\model{} achieves a better privacy-utility tradeoff.}  
While PII transformations can obscure sensitive content, they often compromise task utility. To evaluate this tradeoff, we define a tradeoff rate metric \( R = \frac{\Delta \text{Leakage}}{\Delta \text{Utility}} \), where \(\Delta \text{Leakage}\) is the reduction in privacy leakage, and \(\Delta \text{Utility}\) is the drop in downstream performance relative to the unprotected embeddings. For simplicity and upper-bound estimation, we assume that PII-based methods reduce leakage to zero. As shown in Table~\ref{tab:pii-transformation}, \model{} achieves markedly higher tradeoff rates of 23.11 on STS12 and 26.30 on FIQA, compared to 4–6 for the PII-based approaches. These results verify the advantage of embedding-level defenses like \model{}, which enable more nuanced and fine-grained privacy preservation without sacrificing utility.
\begin{table}[ht!]
    \centering
    \small
    \caption{Comparison of privacy-utility tradeoff between \model{} and PII transformation methods.}
    \vspace{3pt}
    \begin{tabular}{c|c|c|c|c}
        \toprule
        \textbf{Dataset} & \textbf{Defense Methods} & \textbf{Leakage $\downarrow$(\%)} & \textbf{Downstream $\uparrow$(\%)} & \textbf{Tradeoff Rate $R$ $\uparrow$} \\
        \midrule
        \multirow{5}{*}{\textbf{STS12}} 
        & \textbf{Unprotected} & 60.09 & 74.25 & - \\
        & \textbf{RemovePII} & - & 59.47 & 4.12 \\
        & \textbf{Random-Replace} & - & 60.50 & 4.42 \\
        & \textbf{Semantic-Replace} & - & 64.46 & 6.22 \\
        & \textbf{\model{} ($\epsilon=20$)} & 36.98 & 73.25 & \textbf{23.11} \\
        \midrule
        \multirow{5}{*}{\textbf{FIQA}} 
        & \textbf{Unprotected} & 77.35 & 33.56 & - \\
        & \textbf{RemovePII} & - & 21.24 & 6.27 \\
        & \textbf{Random-Replacement} & - & 19.20 & 5.38 \\
        & \textbf{Semantic-Replacement} & - & 18.37 & 5.09 \\
        & \textbf{\model{} ($\epsilon=20$)} & 53.41 & 32.65 & \textbf{26.30} \\
        \bottomrule
    \end{tabular}
    \label{tab:pii-transformation}
\end{table}

\subsection{Hyperparameter Analysis}
\label{sec:hparam}
We analyze the impact of the regularization parameter $\lambda$ on the tradeoff between privacy and utility. As shown in Table~\ref{tab:hparam}, increasing $\lambda$ results in reduced leakage across all values of $\epsilon$, confirming that stronger regularization suppresses sensitive information more effectively. However, this comes at the cost of reduced downstream performance, particularly under lower $\epsilon$, where the noise becomes more dominant. Notably, moderate values such as $\lambda = 1\mathrm{e}{-3}$ strike a balance, achieving significant privacy gains with tolerable performance degradation. 
\begin{table*}[ht]
\centering
\small
\caption{Effect of the regularization hyperparameter $\lambda$ on privacy leakage and downstream performance under different privacy budgets $\epsilon$. Smaller $\lambda$ values lead to stronger regularization.}

\resizebox{\textwidth}{!}{
\begin{tabular}{c|c|ccccc|ccccc}
\toprule
&
& \multicolumn{5}{c|}{\textbf{Leakage} $\downarrow$(\%)} 
& \multicolumn{5}{c}{\textbf{Downstream} $\uparrow$ (\%)} \\ \cmidrule(lr){3-7} \cmidrule(lr){8-12}
\textbf{Dataset}& $\lambda$ & $\epsilon=5$ & 10 & 20 & 30 & 40 & $\epsilon=5$ & 10 & 20 & 30 & 40 \\
\midrule
\multirow{5}{*}{\textbf{STS12}} 
& $1\mathrm{e}{-2}$ & 0.52 & 0.91 & 1.37 & 1.84 & 2.15 & 22.14 & 36.87 & 43.25 & 44.06 & 44.38 \\
& $5\mathrm{e}{-3}$ & 1.36 & 3.82 & 7.14 & 10.34 & 13.76 & 27.61 & 48.05 & 56.17 & 57.88 & 58.19 \\
& $1\mathrm{e}{-3}$ & 4.34 & 19.31 & 36.98 & 43.81 & 47.54 & 34.12 & 65.27 & 73.25 & 74.04 & 74.15 \\
& $5\mathrm{e}{-4}$ & 7.62 & 25.83 & 44.23 & 51.41 & 55.27 & 31.33 & 59.78 & 67.10 & 67.98 & 68.24 \\
& $1\mathrm{e}{-4}$ & 9.88 & 33.02 & 51.47 & 58.62 & 62.90 & 28.40 & 52.45 & 59.66 & 60.34 & 60.79 \\
\midrule
\multirow{5}{*}{\textbf{FIQA}} 
& $1\mathrm{e}{-2}$ & 0.78 & 1.28 & 1.93 & 2.71 & 3.26 & 8.71 & 13.82 & 17.44 & 17.83 & 18.14 \\
& $5\mathrm{e}{-3}$ & 2.26 & 6.42 & 11.68 & 17.23 & 21.14 & 11.38 & 18.67 & 25.09 & 25.66 & 25.94 \\
& $1\mathrm{e}{-3}$ & 8.48  & 31.62 & 53.41 & 63.51 & 68.13  & 14.87 & 23.45 & 32.65 & 33.58 & 33.85 \\
& $5\mathrm{e}{-4}$ & 11.05 & 36.43 & 58.23 & 67.28 & 71.83 & 13.72 & 21.42 & 28.93 & 29.84 & 30.11 \\
& $1\mathrm{e}{-4}$ & 13.61 & 40.82 & 62.14 & 70.25 & 74.44 & 12.45 & 18.63 & 25.42 & 26.28 & 26.50 \\
\bottomrule
\end{tabular}
}
\label{tab:hparam}
\end{table*}

\section{Computational Overhead}
\label{appendix:overhead}

We provide an analysis of the computational overhead introduced by \textsc{SPARSE}, focusing on both inference-time noise sampling and offline neuron mask training.

\textbf{Inference Cost.}
During inference, the dominant overhead arises from sampling Mahalanobis noise, which involves a lightweight matrix multiplication. To evaluate efficiency, we measured the average inference latency per sample over 10{,}000 runs and compared \textsc{SPARSE} with two representative baselines: the Laplace Mechanism and the Purkayastha Mechanism. The results are summarized in Table~\ref{tab:inference_latency}.

\begin{table}[h]
\centering
\caption{Average inference time per sample (in microseconds).}
\label{tab:inference_latency}
\begin{tabular}{l c}
\toprule
\textbf{Method} & \textbf{Latency ($\mu$s/sample) $\downarrow$} \\
\midrule
Laplace Mechanism   & 39.8 \\
Purkayastha Mechanism & 33,200 \\
\textsc{SPARSE} (ours) & 48.4 \\
\bottomrule
\end{tabular}
\end{table}

As shown, \textsc{SPARSE} introduces only a marginal overhead compared to the Laplace Mechanism (less than 25\% increase), while being several orders of magnitude more efficient than the Purkayastha Mechanism. This confirms that \textsc{SPARSE} is suitable for real-time and low-latency applications.

\textbf{Training Cost.}
The training cost arises from learning the neuron mask used to identify privacy-sensitive dimensions. This is a one-time offline process that can be precomputed and reused, and therefore does not affect inference efficiency. The training time scales linearly with dataset size and remains practical in common settings. For instance, training on 10,000 samples takes 25.3 minutes, and on 20,000 samples, it completes in under 45 minutes. Further acceleration can be achieved with larger batch sizes or distributed training.
\section{Implementation Details of \model{}}
\label{appendix:implementation-detail}

\subsection{Training Algorithm for Neuron-Sensitivity Detection}

Algorithm~\ref{algorithm:training-pipeline} details the training procedure used to learn a neuron mask that identifies privacy-sensitive dimensions in the embedding space. The method jointly optimizes a differentiable binary mask and a classifier to distinguish between samples containing a privacy concept and their perturbed counterparts. A hard concrete distribution is used to approximate binary masking in a differentiable manner, and the training objective combines a classification loss with a sparsity-inducing regularization term.

\begin{algorithm}
\caption{Training Neuron Mask for Privacy-Sensitive Dimension Detection}
\label{algorithm:training-pipeline}
\begin{algorithmic}[1]
\State \textbf{Input:} Paired dataset \(D^+, D^-\), embedding function \(\Phi(\cdot)\), learning rate \(\eta\), temperature \(\beta\), regularization coefficient \(\lambda\), initialization of mask logits \(\log \alpha\), constants \(\xi = 1.1\), \(\gamma = -0.1\)
\State Initialize classifier parameters \(\theta\)
\For{each epoch $= 1$ to $N$}
    \For{each minibatch $\{(\mathbf{s}_i^+, \mathbf{s}_i^-)\} \subset (D^+, D^-)$}
        \For{each mask dimension $i$}
            \State Sample $\mu_i \sim \mathcal{U}(0,1)$
            \State Compute $s_i = \sigma\left(\frac{1}{\beta_i} \left( \log \frac{\mu_i}{1 - \mu_i} + \log \alpha_i \right)\right)$
            \State Compute $m_i = \min \left(1, \max\left(0, s_i (\xi - \gamma) + \gamma \right)\right)$
        \EndFor
        \State Compute masked embeddings: $\Phi^+_m = \Phi(\mathbf{s}^+) \odot \mathbf{m}$, $\Phi^-_m = \Phi(\mathbf{s}^-) \odot \mathbf{m}$
        \State Compute classification loss $\mathcal{L}_{\text{cls}}(\mathbf{m}, \theta)$ using Eq.~\eqref{eq:loss_function}
        \State Compute regularization loss $\mathcal{L}_{\text{reg}}(\mathbf{m})$ using Eq.~\eqref{eq:reg}
        \State Compute total loss: $\mathcal{L}_{\text{total}} = \mathcal{L}_{\text{cls}} + \lambda \mathcal{L}_{\text{reg}}$
        \State Update \(\theta \leftarrow \theta - \eta \nabla_\theta \mathcal{L}_{\text{total}}\)
        \State Update \(\log \alpha \leftarrow \log \alpha - \eta \nabla_{\log \alpha} \mathcal{L}_{\text{total}}\)
\State Update \(\log \beta \leftarrow \log \beta - \eta \nabla_{\log \beta} \mathcal{L}_{\text{total}}\)
    \EndFor
\EndFor
\State \textbf{Output:} Trained classifier \(P_\theta\), optimized neuron mask \(\mathbf{m}\)
\end{algorithmic}
\label{algor:neuron-mask-training}
\end{algorithm}

\subsection{Training Settings}
We train our privacy-sensitive dimension identification model using mini-batch gradient descent with the Adam optimizer. The model is trained for 100 epochs with a batch size of 64 and a learning rate of $1 \times 10^{-4}$. The predictor $P_{\theta}$ is implemented as a multi-layer perceptron (MLP) with two hidden layers of sizes 256 and 128, respectively, and ReLU activations. We conduct a hyperparameter search over $\lambda \in \{0.01, 0.005, 0.001, 0.0005, 0.0001\}$ and set $\lambda = 0.001$ as the default for all experiments unless stated otherwise. All implementations are based on PyTorch.

\subsection{ Computing Resources}
\label{sec:computing-resource}
All experiments were performed on a workstation with an Intel Core i9-10980XE CPU (18 cores, 36 threads, 3.00GHz) and an NVIDIA RTX 3090 GPU with 24GB of memory. The system runs on a 64-bit x64 architecture.

\section{Implementation details of Attack Models}
\label{appendix:attack-models}

To thoroughly evaluate the privacy risks associated with text embeddings, we adopt three representative attack models: Vec2text~\cite{morris2023text}, GEIA~\cite{GEIA}, and MLC~\cite{information_leak}. These models represent both sentence-level and word-level inference attacks, and are implemented or fine-tuned under controlled conditions to assess the effectiveness of various privacy-preserving mechanisms.

\subsection{Vec2text}
Vec2text is a sentence-level attack model designed to reconstruct input text directly from embeddings. We use the publicly available pre-trained version of Vec2text\footnote{\url{https://huggingface.co/ielabgroup/vec2text_gtr-base-st_inversion}}, which is based on the GPT-2 architecture. To simulate a realistic adversarial scenario, we fine-tune this model for 50 epochs individually on embeddings perturbed by each defense method (LapMech, PurMech, and \model{}). The fine-tuning is performed using a batch size of 32 and a learning rate of 5e-5, optimized with Adam. 

\subsection{GEIA}
GEIA is another sentence-level reconstruction model that inverts embeddings into textual sequences using a fine-tuned GPT-2 decoder. Unlike Vec2text, GEIA employs a mapping network to project embeddings into the GPT-2 latent space. We use GEIA based on the original paper\footnote{\url{https://github.com/HKUST-KnowComp/GEIA}}, using a two-layer MLP as the projection module. The GPT-2 decoder is initialized from the HuggingFace Transformers library and fine-tuned for 30 epochs using embeddings from each defense method. The model is optimized using Adam with a learning rate of 3e-5 and trained with a batch size of 16. 

\subsection{MLC}
MLC is a word-level embedding inversion attack model that predicts whether specific sensitive tokens are present in the input text based on its embedding. The model consists of a three-layer MLP with hidden sizes [512, 256, 128], ReLU activations, and a sigmoid output layer. We train a separate MLC for each perturbation method using a binary cross-entropy loss function. Training is performed for 20 epochs using a batch size of 64 and a learning rate of 1e-4 with the Adam optimizer. 

\section{Case Study on MIMIC-III dataset}
\label{appendix:case-study}
To demonstrate the privacy risks in a specific threat domain, we conducted a case study using MIMIC-III clinical notes~\cite{mimic}.
Table~\ref{tab:case_study} presents the results of embedding inversion attack on two types of sensitive tokens ("age" and "disease name") with different noise levels. We assessed the semantic fidelity of the reconstructed sentences by comparing their similarity to the original text using cosine similarity from an external embedding model.

\noindent In Example 1, we applied a strong perturbation level of $\epsilon=5$ to perturb the text embeddings. Under this condition, all three defense methods (LapMech, PurMech, and \model{}) effectively prevented the leakage of sensitive age information. However, LapMech and PurMech significantly degraded the semantic quality of the embeddings with only 11\% of the original semantic similarity. In contrast, \model{} maintained 62\% semantic similarity. 
In Example 2, we used a lower perturbation level of $\epsilon=10$. Here, both LapMech and PurMech failed to protect against privacy leakage and further compromised the semantic integrity of the embeddings. Conversely, \model{} successfully safeguarded the sensitive information while preserving semantic quality of the embeddings.

\begin{table*}[ht!]
\centering
\small
\caption{Case study on the MIMIC-III dataset with two sensitive words and perturbation level $\epsilon$. We highlight the leakage of sensitive words and demonstrate the semantic similarity of the reconstructed sentence to the ground truth. 
}
\resizebox{\linewidth}{!}{
\begin{tabular}{c|c|c|c}
\toprule
\multicolumn{4}{l}{\textbf{Example 1: Protect age with strong noise $\mathbf{\epsilon=5}$ }}  \\ \midrule
Method & Defense & Semantic & Reconstructed Sentence \\
\midrule
Ground truth & - & - & this \textcolor{purple}{\textbf{68-year-old}} white male has a history of diabetes, hyperlipidemia and hypertension \\
Non-private   & \textcolor{red}{\textbf{Failed}} & 0.98 & this \textcolor{purple}{\textbf{68-year-old}} white male has a history of hypertension, hyperlipidemia, and diabetes. \\
LapMech &   \textcolor{blue}{\textbf{Success}} &  0.11 & age (e.g., blood edemas in males of African PH whose history has been hyperesoteric \\
PurMech &   \textcolor{blue}{\textbf{Success}} & 0.11 & age (e.g., blood edemas in males of African PH whose history has been hyperesoteric \\
\model{} &   \textcolor{blue}{\textbf{Success}} & 0.62 & a white male with diabetes has existing Hyperlipidemia history \\
\midrule
\multicolumn{4}{l}{\textbf{Example 2: Protect disease name with weak noise $\mathbf{\epsilon=10}$}}  \\ \midrule
Ground truth & - & - & this male has had known \textcolor{purple}{\textbf{coronary}} disease and prior silent myocardial infarction. \\
Non-private   & \textcolor{red}{\textbf{Failed}} & 0.95 & this male has known silent \textcolor{purple}{\textbf{coronary}} disease and has had prior myocardial infarction. \\
LapMech   & \textcolor{red}{\textbf{Failed}} & 0.23 & male has known \textcolor{purple}{\textbf{coronary}} myopathy. Silent rib syndrome, white-fiddled gyne, and ca \\
PurMech   & \textcolor{red}{\textbf{Failed}} & 0.18 & male has known \textcolor{purple}{\textbf{coronary}} myopathy. Silent-fidged heart attacks. White-fidged-fid \\
\model{}   & \textcolor{blue}{\textbf{Success}} & 0.54 & an active male with myocardial infarction, congestive heart disease. \\
\bottomrule
\end{tabular}
}
\label{tab:case_study}
\end{table*}

\section{Limitations}
\textbf{Limited Scope of Attack Scenario.}
Our method is explicitly tailored to mitigate embedding inversion attacks, in which an adversary seeks to reconstruct input data from text embeddings. However, it does not offer guarantees against other widely studied privacy attacks such as membership inference attacks. 
Although our approach is compatible with differential privacy mechanisms in principle, we leave the integration of comprehensive privacy protections to future work. 

\textbf{Protecting Broader Privacy Concept.}
Our framework estimates privacy-sensitive dimensions based on predefined concepts, which works well for targeted protection but might not scale well with broader or abstract notions of privacy. As the definition of privacy becomes overly broad (e.g., "any identifiable content"), our method loses its specificity and utility. A potential solution is to move toward \emph{concept-agnostic} sensitivity estimation regardless of predefined labels.
\section{Use of Large Language Models (LLMs)}
In this work, large language models (LLMs) were used in two ways. First, we employed pre-trained open-source LLMs as embedding generators to produce text representations, and also served as the foundation for conducting inversion attacks in our experiments. Second, an LLM-based assistant (OpenAI GPT-4) was used to improve the clarity and readability of the manuscript through grammar checking and minor language refinements. All decisions regarding research design, experimental setup, analysis, and interpretation were made solely by the authors.



\end{document}